\documentclass{article}
\DeclareMathAlphabet\mathbfcal{OMS}{cmsy}{b}{n}

\usepackage[margin=1in]{geometry}
\usepackage{amsthm,amsmath,amsopn,amsfonts,amssymb,verbatim,color}
\usepackage{graphicx}
\usepackage{subfigure}

\usepackage{bm}
\usepackage{epsfig,slashed}
\usepackage[T1]{fontenc}
\usepackage[colorlinks=true,citecolor=blue,linkcolor=blue,urlcolor=blue]{hyperref}
\bibliographystyle{apsrev4-1}
\usepackage{psfrag}

\usepackage{color} 




\begin{document}

\newcommand{\tr}{\mathop{\mathrm{Tr}}}
\newcommand{\bsigma}{\boldsymbol{\sigma}}
\newcommand{\re}{\mathop{\mathrm{Re}}}
\newcommand{\im}{\mathop{\mathrm{Im}}}
\renewcommand{\b}[1]{{\boldsymbol{#1}}}
\newcommand{\diag}{\mathrm{diag}}
\newcommand{\sign}{\mathrm{sign}}
\newcommand{\sgn}{\mathop{\mathrm{sgn}}}
\renewcommand{\c}[1]{\mathcal{#1}}
\renewcommand{\d}{\text{\dj}}
\newcommand{\red}{\textcolor{red}}
\newcommand{\green}{\textcolor{green}}
\newcommand{\noteKLH}[1]{{\color{black}  #1}}

\newcommand{\mb}{\bm}
\newcommand{\ua}{\uparrow}
\newcommand{\da}{\downarrow}
\newcommand{\ra}{\rightarrow}
\newcommand{\la}{\leftarrow}
\newcommand{\mc}{\mathcal}
\newcommand{\bs}{\boldsymbol}
\newcommand{\lra}{\leftrightarrow}
\newcommand{\nn}{\nonumber}
\newcommand{\half}{{\textstyle{\frac{1}{2}}}}
\newcommand{\mf}{\mathfrak}
\newcommand{\MF}{\text{MF}}
\newcommand{\IR}{\text{IR}}
\newcommand{\UV}{\text{UV}}
\newcommand{\be}{\begin{equation}}
\newcommand{\ee}{\end{equation}}

\newtheorem{theorem}{Theorem}[section]
\newtheorem{corollary}{Corollary}[theorem]
\newtheorem{lemma}[theorem]{Lemma}

\DeclareGraphicsExtensions{.png}

\title{Supplementary Information: Quantum Entangled Fractional Topology and Curvatures}

\author{Joel Hutchinson$^{1}$ \& Karyn Le Hur$^1$ \\
CPHT, CNRS, Ecole Polytechnique, Institut Polytechnique de Paris, \\
Route de Saclay, 91128 Palaiseau, France}

\date{}

\maketitle

\begin{abstract}

In this Supplementary Information, we provide two alternative derivations of Eqs. (6) and (7) in the manuscript and through smooth fields on the sphere we derive the relations between the topological geometrical responses and measurable observables in transport properties. We also discuss the relation between the one-half topological number on each sphere (associated to a plane) and a $\pi$ Berry phase at one pole (Dirac point). We show the  gauge invariance associated to the topological response (Sec.~\ref{sec:6proof}). We then discuss generalized spin models with $XY$ coupling (Sec.~\ref{sec:anisotropic}) and larger numbers of spins (Sec.~\ref{sec:nspins}) \noteKLH{ which allow us to identify other forms of entangled spin states}. Then, we provide definitions on the lattice geometry and on the calculation of entanglement entropy, as well as results for the edge modes and local density of states  for a finite system (Sec.~\ref{app:haldane}). We also discuss the measure of the topological number with a driving protocol in time and study transition amplitudes in time (Sec.~\ref{app:LZ}). \noteKLH{ We address a time-evolution protocol for a situation with $N>2$ spins to show the stability of these fractional topological numbers for larger systems.}

\end{abstract}

\section{Gauge invariance and non-quantization of $\mathcal{C}^j$}\label{sec:6proof}

First in Sec.~\ref{sec:topo_resp}, we provide a geometrical proof of Eqs.~(6) and (7) in the article from vector theorems. This illustrates the intriguing fact that the topological response can be encoded in the poles of the Bloch sphere, which holds for both the regular Chern number $C$ and the partial Chern number $C^j$. In Sec.~\ref{sec:transport}, we show the relation between the smooth fields in Stokes' theorem and quantum transport properties related to the charge polarization and quantum Hall conductivity. 
Then, in Sec.~\ref{VPW} we provide a proof based on the specific class of wavefunctions we study. In Sec.~\ref{cases2}, we verify our arguments for various forms of wave-functions.

\subsection{Topological Response}\label{sec:topo_resp}
In the article, the Berry connection $\b{\mathcal{A}}^j$ is well-defined on the abstract parameter space $\{\theta,\phi\}$. The goal in this section is to reinterpret $\b{\mathcal{A}}^j$ as a vector on the surface of a sphere $S^2$, and then to make use of Stokes's theorem in three dimensions to evaluate the Chern number. We start from Eq.~(1) of the article:
\be
C=-\frac{1}{2 \pi} \int_{S^{2}}d^{2}\mathbf{n}\cdot(\nabla \times \b{\mathcal{A}}),\label{eq:cherncurl}
\ee
Here, $\b{\mathcal{A}}$ is computed in an arbitrary gauge, $\b{n}$ is the normal vector to the sphere, and we have used 
\be
\label{2}
\mathcal{F}_{\phi\theta}d\theta d\phi=(\partial_\phi\mathcal{A}_\theta-\partial_\theta\mathcal{A}_\phi)d\theta d\phi=-(\nabla\times\b{\mathcal{A}})\cdot d^2\b{r}.
\ee  
Note that we have dropped the $j$ superscript since this proof applies equally to the Chern number $C$ and the partial Chern number $C^j$. 

We decompose the hemisphere into north and south hemispheres demarcated by a fixed $\theta=\theta_c$ (which need not be at the equator) such that
\be
C=-\frac{1}{2\pi} \int_{\rm north}d^{2} \b{n}\cdot( \nabla\times \b{\mathcal{A}})-\frac{1}{2 \pi} \int_{\rm south} d^{2}\b{n}\cdot(\nabla \times \b{\mathcal{A}}).\label{eq:northsouth}
\ee
In mapping the space $\{\theta,\phi\}$ to $S^2$, we have to take special care of the poles because while $\b{\mathcal{A}}(0,\phi)$ and $\b{\mathcal{A}}(\pi,\phi)$ have well-defined $\phi$-components that contribute to the Chern number, any smooth vector field on $S^2$ must have vanishing $\phi$-components at the poles. The case where these components do vanish and the mapping is faithful, is precisely the case when the Chern number is trivial. The use of Stokes's theorem requires that we have a smooth smooth vector field over the relevant manifold. Here, we would like to show the form of this smooth field as well as the form of $C$ in terms of the Berry connections at the poles. We hypothesize that we can build a piecewise smooth field $\b{\mathcal{A}}'$ on the north and south hemispheres, such that
\be
\nabla\times \b{\mathcal{A}}'=\nabla\times \b{\mathcal{A}}
\ee
on each hemisphere, for all values of the azimuthal and polar angles. Now, we show the form of this field, as follows. Looking at Eq.~\eqref{eq:cherncurl}, we can subtract infinitesimally small areas encircling the poles to define a surface $S^{2'}$ which is no longer a closed manifold. Since these areas are infinitesimally small and the Berry curvature is finite, this will not affect the Chern number so we can write
\be
C=-\frac{1}{2 \pi} \int_{S^{2'}} d^{2}\mathbf{n}\cdot(\nabla \times \b{\mathcal{A}}).
\ee
The surface $S^{2'}$ can be decomposed into a north (north') hemisphere defined by $0<\theta<\theta_c$ and south (south') hemisphere defined by $\theta_c<\theta<\pi$ on which the field $\b{\mathcal{A}}$ is smooth, such that
\be
C=-\frac{1}{2 \pi} \int_{\rm north'}d^{2}\mathbf{n}\cdot(\nabla \times \b{\mathcal{A}})-\frac{1}{2 \pi} \int_{\rm south'}d^{2}\mathbf{n}\cdot(\nabla \times \b{\mathcal{A}}).
\ee
On north', we have from Stokes' theorem:
\be
-\frac{1}{2 \pi} \int_{\rm north'}d^{2}\mathbf{n}\cdot(\nabla \times \b{\mathcal{A}})=-\frac{1}{2 \pi} \int_{0}^{2 \pi} d \phi \mathcal{A}_{\phi}\left(\theta_{c}^-,\phi\right)+\frac{1}{2 \pi} \int_{0}^{2 \pi} d \phi \mathcal{A}_{\phi}(0),\label{eq:north'}
\ee
where $\theta_c^-$ means we approach $\theta_c$ from the north. This form assumes that the field is uniquely defined on the boundary path at the north pole with ${\mathcal{A}}_{\phi}(0)={\mathcal{A}}_{\phi}(0,\phi).$ The right-hand side then corresponds to the two boundary paths encircling north'. Similarly, we have for south':
\be
-\frac{1}{2 \pi} \int_{\rm south'}d^{2}\mathbf{n}\cdot(\nabla \times \b{\mathcal{A}})=+\frac{1}{2 \pi} \int_{0}^{2 \pi} d \phi \mathcal{A}_{\phi}\left(\theta_{c}^+,\phi\right)-\frac{1}{2 \pi} \int_{0}^{2 \pi} d \phi \mathcal{A}_{\phi}(\pi).\label{eq:south'}
\ee
Again, $\theta_c^+$ means we approach $\theta_c$ from the south, and the field is uniquely defined on the boundary path at the pole with $\mathcal{A}_{\phi}(\pi)=\mathcal{A}_{\phi}(\phi,\pi).$ These expressions suggest a natural definition for the smooth fields on the full north and south hemispheres:
\begin{eqnarray}
\label{cases}
\b{\mathcal{A}}'(\theta,\phi)&\equiv&\begin{cases}
\b{\mathcal{A}}(\theta,\phi)-\mathcal{A}_\phi(0)\hat{\phi} & \theta<\theta_c\\
\b{\mathcal{A}}(\theta,\phi)-\mathcal{A}_\phi(\pi)\hat{\phi} & \theta>\theta_c,
\end{cases}
\end{eqnarray}
where $\hat{\phi}$ refers to the unit vector associated with the azimuthal angle.  We satisfy $\nabla\times\b{\mathcal{A}}'=\nabla\times\b{\mathcal{A}}$ over each hemisphere, and also ensures that
\be
C=-\frac{1}{2 \pi} \int_{S^{2}}d^{2}\mathbf{n}\cdot(\nabla \times \b{\mathcal{A}}').
\ee
Note that these identities further imply that if we fix $\theta$ and perform a closed path in $\phi$ on a given hemisphere, then
\be
\oint \b{\mathcal{A}}' \cdot d\b{\ell}=\oint \b{\mathcal{A}} \cdot d\b{\ell}-\oint \b{\mathcal{A}}_{\phi}(\rm pole) \cdot d \b{\ell},
\ee
which can be viewed as the integral of a flux through the disk at fixed $\theta$. The last term then places the pole information inside a cylinder with an infinitesimally small radius inside the hemisphere.

Returning to Eqs.~\eqref{eq:north'}, and~\eqref{eq:south'}, we show the form of $C$ in terms of Berry connections as follows. Suppose we move the boundary very close to the north pole such that $\theta_{c}\rightarrow0$ (the same relation would be obtained with $\theta_{c} \rightarrow \pi$), then
\begin{eqnarray}
-\frac{1}{2 \pi} \int_{\rm north'}d^{2}\mathbf{n}\cdot(\nabla \times \b{\mathcal{A}})&=&-\frac{1}{2 \pi} \int_{0}^{2 \pi} d \phi \mathcal{A}_{\phi}\left(0\right)+\frac{1}{2 \pi} \int_{0}^{2 \pi} d \phi \mathcal{A}_{\phi}(0)=0,\\
-\frac{1}{2 \pi} \int_{\rm south'}d^{2}\mathbf{n}\cdot(\nabla \times \b{\mathcal{A}})&=&+\frac{1}{2 \pi} \int_{0}^{2 \pi} d \phi \mathcal{A}_{\phi}\left(0\right)-\frac{1}{2 \pi} \int_{0}^{2 \pi} d \phi \mathcal{A}_{\phi}(\pi).
\end{eqnarray}
Therefore, by summing these two lines, we obtain
\be
\label{CA}
C=\mathcal{A}_{\phi}(0)-\mathcal{A}_{\phi}(\pi).
\ee
This corresponds to Eq. (7) in the article which shows that the topology can be encoded through the poles. 
\subsection{Smooth Fields and Topological Observables}\label{sec:transport}

Here, we show a relation between the formulae (\ref{eq:north'})-(\ref{cases}), (\ref{CA}) and quantum transport properties associated to the quantum Hall conductivity and the charge polarization, directly from the surface of the sphere. 

We apply an electric field ${\bf E}=E{\bf e}_{x_{\parallel}}$ in real space along the axis associated to the (wave-vector) coordinate $k_{\parallel}$ in the reciprocal space. Then,  the velocity of the particle $v_{\parallel}$ satisfies  $\hbar k_{\parallel}=mv_{\parallel}$ and from Newton's equation with the force ${\bf F}=e{\bf E}$ we have $ma_{\parallel}=\hbar \dot{k}_{\parallel}=\hbar \partial k_{\parallel}/\partial t = eE$, where $m$ refers classically to the mass of a particle and ${\bf a}$ to the acceleration. Quantum mechanically, we have used the de Broglie principle. Then, we introduce the map $(k_{\parallel},k_{\perp})=f(\theta,\phi)$ such that the two Dirac points now correspond to the poles. The direction of the electric field here corresponds to a line crossing the two Dirac points, from north to south poles, e.g. at azimuthal angle $\phi=0$ and characterized through $\theta(t)$. In flat space, we can fix effectively the lattice spacing such that the (relative) distance between the poles is equal to $\pi$ as for the unit sphere. For this situation, we observe for a charge $e$ that $\dot{\theta}=(e/\hbar)E$ and $\hbar=h/(2\pi)$ being the reduced Planck constant. At time $t=0$, the particle is at the north pole and at small time $dt$, the particle has moved to the position $d\theta$, such that we have $\int_0^{\theta(t)} d\theta = \theta(t)=\int_0^t dt \dot{\theta} = e E t/\hbar$. The velocity of a particle in real space can be written as $\dot{\bf r}=(\dot{x}_{\parallel},\dot{x}_{\perp})$ and below we study one component of the velocity vector, and more precisely the {\it transverse} component to the electric field which is related to topological properties. To show the relation between the transport from the reciprocal space and the Berry fields, one can start from quantum mechanics laws. We have a simple relation between real and reciprocal or wave-vector k-space from the Parseval-Plancherel relation 
\begin{equation}
\label{ParsevalPlancherel}
 \frac{e}{T}\int_0^T dt \frac{d\langle x\rangle}{dt} = e\frac{(\langle x\rangle(T)-\langle x\rangle(0))}{T} = \frac{e}{T} \int \left(\psi^*(T,k) i \frac{\partial \psi}{\partial k}-\psi^*(0,k) i \frac{\partial \psi}{\partial k}\right)\frac{dk}{2\pi},
\end{equation}
with here $x=x_{\perp}$, $k=k_{\perp}$ and $T$ referring to the final time in the protocol. The left-hand side has the dimension of an (averaged) current density in a one-dimensional pumping protocol and on the right-hand side we introduce the wave-function of interest, e.g., the ground-state wavefunction associated to a given sub-system $j$ resolved in the reciprocal space. On the sphere, we study the perpendicular electron current 
\begin{equation}
J_{\perp}^e=\frac{e}{T}\int_0^T dt \frac{d\langle x_{\perp}\rangle}{dt} = \frac{e}{T}\left(\langle x_{\perp}\rangle(T)-\langle x_{\perp}\rangle(0)\right) = \frac{e}{T}\oint \frac{d\phi}{2\pi} \left(\psi^*(T,\phi) i\frac{\partial \psi}{\partial \phi}-\psi^*(0,\phi) i \frac{\partial \psi}{\partial \phi}\right)
\end{equation}
and the time $T$ is related to the angle $\theta$ through the equality $\theta=e E T/\hbar = v^* T$, referring to a specific point along the path at $\phi=0$.
From the analogy between the left and right-hand side, we can define a current density from the reciprocal space.

Then, we define for a fixed angle $\phi$ 
\begin{equation}
J_{\phi}(\theta,\phi) = \frac{i e}{4\pi T}\left(\psi^{*}\frac{\partial}{\partial \phi}\psi - \frac{\partial\psi^*}{\partial\phi}\psi\right) = \frac{ie}{2\pi T}\psi^*\frac{\partial}{\partial \phi}\psi
\end{equation}
such that 
\begin{equation}
J_{\perp}^e(\theta) = \oint d\phi \left(J_{\phi}(\theta,\phi) - J_{\phi}(0,\phi)\right).
\end{equation}
We identify the important relation
\begin{equation}
J_{\phi}(\theta,\phi) = \frac{e}{2\pi T}{\mathcal A}_{\phi}(\theta,\phi)
\end{equation}
Therefore, we observe a relation between the  transverse current density and the smooth fields:
\begin{equation}
\label{transport}
J_{\perp}^e(\theta) = \frac{e}{2\pi T} \oint d\phi {\mathcal A}'_{\phi,\theta<\theta_c}(\theta,\phi).
\end{equation}
Here, $\mathcal {A}'_{\phi,\theta<\theta_c}(\theta,\phi)$ refers to the azimuthal angle component related to the smooth field defined in (\ref{cases}) with an angle $\theta$ in the north hemisphere such that $\theta<\theta_c$. This shows a relation between the perpendicular current 
and the smooth fields.

These definitions agree with general definitions, which are also applicable for many-body systems, between current density and charge polarization in a pump-geometry. The current density in a one-dimensional topological pump geometry reads
\begin{equation}
\label{currentdensity}
j=n e v = e\int_{BZ} \frac{dq}{2\pi} v(q),
\end{equation}
with the anomalous velocity $v(q)=-\Omega_{qt}$ being related to the Berry curvature and $BZ$ refers to the Brillouin zone. We have defined $F_{\mu\nu}=F_{\phi\theta}$ and here we have the relation $\Omega_{qt}=- v^* F_{\phi\theta} =-v^* \partial \mathcal{A}_{\phi}(\theta,\phi)/\partial\theta$ with $v^*=eE/\hbar$ reproducing the Karplus-Luttinger velocity ${\bf v}=(e/\hbar){\bf E}\times {\bf F}$ with $|{\bf v}|=(e/\hbar)E|F_{\phi\theta}|$. The left-hand side measures the current density in real space with $n$ being the density on the lattice. The polarization associated to the charge $e$ is 
\begin{equation}
\Delta P = -e\int_0^T dt \int_{BZ} \frac{dq}{2\pi} \Omega_{qt}
\end{equation}
and
\begin{equation}
\Omega_{\mu\nu} = \partial_\mu \mathcal{A}_{\nu} - \partial_\mu \mathcal{A}_{\nu}.
\end{equation}
For the sphere geometry, the polarization takes the form
\begin{equation}
\Delta P = e\oint \frac{d \phi}{2\pi}\left(\mathcal{A}_{\phi}(\theta,\phi) - \mathcal{A}_{\phi}(0,\phi)\right) = e\oint \frac{d\phi}{2\pi} \mathcal{A}'_{\phi,\theta<\theta_c}(\theta,\phi)
\end{equation}
and here we use periodic boundary conditions in the direction of the azimuthal angle. This relation is in principle independent of the value of the electric field, but we assume the adiabatic limit i.e. the polarization is defined from the ground state.
We also have the relation between current and polarization
\begin{equation}
\Delta P = \int_0^T dt j
\end{equation}
and in the present protocol the current density is uniformly distributed in the time period $\Delta P = j T = J_{\perp} T$, such that any choice of $T$ will measure the same polarization. 

To define a gauge-invariant current density, we can now take the point of view of the south pole and take into account the current associated to a charge which has possibly traveled from south pole to the measurement angle 
$\theta$ during the protocole. Due to the application of the electric field, this charge now corresponds to a charge $-e$. We verify below that this protocole is also equivalent to measure directly the flow of a charge $e$ from north to south pole, such that the protocole is in agreement with physical laws. The choice of south pole as an upper limit for the polar angle here comes from the fact that $\theta\in[0;\pi]$ from the definition of the area and topological properties
on the sphere. The current linked to the charge $-e$ is 
\begin{equation}
 J_{\perp}^{-e}(\theta) = \oint d\phi \left(J_{\phi}(\theta,\phi) - J_{\phi}(\pi,\phi)\right) = -\frac{e}{2\pi T} \oint d\phi {\cal A}'_{\phi,\theta>\theta_c}(\theta,\phi),
\end{equation}
with ${\cal A}'_{\phi,\theta>\theta_c}$ now referring to the azimuthal component of the smooth field in (\ref{cases}) with an angle $\theta$ in the south hemisphere such that $\theta>\theta_c$. 
Related to the definition of the smooth fields, we can now choose the measurement angle to be $\theta_c$, such that we observe a relation between the definitions of north and south hemispheres.
Using Eqs. (\ref{eq:north'}) and (\ref{eq:south'}), the current density measured at the position $(\theta_c,\phi)$ now reads
\begin{equation}
|J_\perp| = |J_{\perp}^{e}(\theta) + J_{\perp}^{-e}(\theta)| = \frac{e}{T} C,
\end{equation}
with the identifications ${\mathcal A}'_{\phi,\theta>\theta_c}(\theta,\phi)={\mathcal A}'_{\phi}(\theta_c^+,\phi)$ and ${\mathcal A}'_{\phi,\theta<\theta_c}(\theta,\phi)={\mathcal A}'_{\phi}(\theta_c^-,\phi)$. Then, we obtain
the quantized polarization on a sphere associated to a physical plane
\begin{equation}
|\Delta P| = J _\perp T = e C.
\end{equation}
This analysis is yet valid for multispheres or planes when we refer to the polarization in a {\it given} plane as a response to the electric field and we have not used a specific form of wave-functions in agreement with the generality of Stokes' theorem. 
For the specific case where $\theta_c\rightarrow \pi$, we obtain the relation
\begin{equation}
\oint d\phi \frac{T^*}{e}\left(J_{\phi}(0,\phi)-J_{\phi}(\pi,\phi)\right) = C
\end{equation}
with $T^* = \pi/v^*$. This formula shows that the topological number can be equivalently measured when driving from north to south poles which is already visible from Eq. (\ref{transport}) when setting $\theta\rightarrow\pi$. 

To make a bridge with the quantum Hall conductivity, we start from Eq. (\ref{currentdensity}) in
the reciprocal space where we can define the current density for a fixed value of $q=\phi$
\begin{equation}
| j |_q = e | v(q)|
\end{equation}
with $v(q)=v^* \partial {\mathcal A}_{\phi}(\theta,\phi)/\partial\theta$. 
The quantum Hall conductivity can be then calculated from the reciprocal space integrating this current density on all
the possible values of $\phi$ and $\theta$ associated to the reciprocal space and respecting the measure in flat space $\frac{(dk_{\parallel} dk_{\perp})}{(2\pi)^2}$. Then, we identify the transverse current density 
\begin{equation}
j_{xy} = \frac{e}{2\pi} v^* \left( \mathcal{A}_{\phi}(0) - \mathcal{A}_{\phi}(\pi) \right) = \frac{(eC)}{2\pi} v^* = \frac{e^2}{h}C E,
\end{equation}
and therefore the quantum Hall conductivity $\sigma_{xy}=(e^2/h)C$ defined from Eq. (\ref{CA}). This formula implies for multispheres' or multiplanes' systems, that applying an electric field on a given subsystem $j$, one can now measure $C$ related to the Berry fields at the Dirac points. The related arguments developed in the article then show that the one-half topological numbers can be observed from a charge polarization protocol and from the quantum Hall conductivity. From Eq. (9) in the article, the topological number can also be measured when driving from north to south pole measuring the magnetizations. To the best of our knowledge, the relation between the smooth fields on the sphere, the Berry connections at the Dirac points and quantum transport properties was not previously mentioned in the literature.

For the specific case $C=1/2$, the Stokes'  theorem allows us to show that we may re-interpret all the topology as a $\pi$ Berry phase encircling the Dirac point associated to the edge structure and to the hemisphere encircling the topological charge. More precisely, Eqs. (\ref{eq:north'}) and (\ref{eq:south'}) are then equivalent to
\begin{equation}
C = \frac{1}{2\pi}  \oint d\phi \left( \mathcal{A}'_{\phi}(\theta_c^+,\phi) -  \mathcal{A}'_{\phi}(\theta_c^-,\phi)\right) = \frac{1}{2}.
\end{equation}
If we now move the boundary very close to the north pole encircling the Dirac point, then from our definitions $\mathcal{A}'_{\phi}(\theta_c^-,\phi)=0$ and we obtain
\begin{equation}
 \oint d\phi \mathcal{A}'_{\phi}(\theta_c^+,\phi) =\pi = \mathcal{A}_{\phi}(0) - \mathcal{A}_{\phi}(\pi).
\end{equation}
The left-hand side of this equation agrees with the fact that in the Haldane model a Dirac point is characterized by a $\pi$ Berry phase, and in the present case, this particular point has a well defined energy bandgap such that a local $\pi$ Berry phase interpretation remains meaningful. The right-hand side of this equation emphasizes that one can transport the topological response from one pole to another, as discussed in the previous section such that only the quantity $\mathcal{A}_{\phi}(0) - \mathcal{A}_{\phi}(\pi)$ defined from the poles is gauge-invariant.

The situation is different
for the blue phase of the phase diagram in Fig. 1b) (in the main article) as in that case $C^1=1$ and $C^2=0$ associated with the two planes, implying that we do not have a quantum Hall conductivity $\sigma_{xy}=\frac{1}{2}\frac{e^2}{h}$ per plane. It is important to mention here for a comparison that when
a sphere develops a unit quantized Chern number, from Stokes' theorem, the Berry phases at the two Dirac points can be transported at one pole (or one Dirac point in the plane) when setting $\theta_c\rightarrow 0$ or $\pi$, and in that case the sum of the two Berry phases englobing the two hemispheres reads $2\pi=\pi+\pi$. 
\subsection{Vector Potential and Wavefunction}
\label{VPW}
Here, we discuss gauge arguments relating the north and south poles to the two-particle wavefunction. 

There exists a set of gauge choices for which the ground state is single-valued (i.e. independent of $\phi$) at $\theta=0$; we denote any of these choices by $|\psi_N\rangle$. Likewise, there are gauges for which the ground state is single-valued at the $\theta=\pi$, denoted $|\psi_S\rangle$. We then define $\mathcal{A}^j_{N\phi}(\theta,\phi)$ and $\mathcal{A}^j_{S\phi}(\theta,\phi)$ such that
\be
\mathcal{A}^j_{N\phi}(\theta,\phi)\equiv i\langle\psi_N(\theta,\phi)|\partial^j_\phi|\psi_N(\theta,\phi)\rangle,\;\; \mathcal{A}^j_{S\phi}(\theta,\phi)\equiv i\langle\psi_S(\theta,\phi)|\partial^j_\phi|\psi_S(\theta,\phi)\rangle.
\ee
Note the behaviour of these functions at the poles. By definition, $|\psi_N\rangle$ is independent of $\phi$ at $\theta=0$ and $|\psi_S\rangle$ is independent of $\phi$ at $\theta=\pi$. Thus, 
\begin{eqnarray}
(\partial^1_\phi+\partial^2_\phi)|\psi_N(\theta=0,\phi)\rangle&=&0,\\
(\partial^1_\phi+\partial^2_\phi)|\psi_S(\theta=\pi,\phi)\rangle&=&0,
\end{eqnarray}
so that
\begin{eqnarray}
\mathcal{A}^1_{N\phi}(0,\phi)+\mathcal{A}^2_{N\phi}(0,\phi)&=&0,\label{eq:pole0}\\
\mathcal{A}^1_{S\phi}(\pi,\phi)+\mathcal{A}^2_{S\phi}(\pi,\phi)&=&0.\label{eq:pole1}
\end{eqnarray}
Furthermore, in any particular gauge, the Berry connection (not the wavefunction) is a single-valued function over the entire parameter space, so that
\be
\mathcal{A}^j_{N\phi}(\pi,\phi)=\mathcal{A}^j_{N\phi}(\pi),\;\;\mathcal{A}^j_{S\phi}(0,\phi)=\mathcal{A}^j_{S\phi}(0).\label{eq:pole2}
\ee
This is easily checked from the form of wavefunction used in the article.

Without saying anything about the symmetry of the wavefunction, we know that $\vec{A}^j_N(\theta,\phi)$ and $\vec{A}^j_S(\theta,\phi)$ are related by a gauge transformation. This relation works just as it does for the standard Berry connection:
\begin{lemma}\label{lem}
For $|\tilde{\psi}\rangle$ and $|\psi\rangle$ in the class of wavefunctions considered in the article, if $|\tilde{\psi}\rangle=e^{i\chi(\theta,\phi)}|\psi\rangle$, then the corresponding partial Berry connections have azimuthal components related via
\be
\tilde{\mathcal{A}}^j_\phi(\theta,\phi)=-\partial_\phi\chi_j(\phi)+\mathcal{A}^j_\phi(\theta,\phi),
\ee
where $\chi(\theta,\phi)=\chi_0(\theta)+\chi_1(\phi)+\chi_2(\phi)$, and $j=1,2$.  
\end{lemma}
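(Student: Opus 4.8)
The plan is to reduce the claim to the familiar gauge-transformation law $\tilde{\mathcal{A}}_\phi = \mathcal{A}_\phi - \partial_\phi\chi$ for an ordinary Berry connection, but carried out one subsystem at a time so that the superscript $j$ is tracked throughout. The essential structural input, which I would take from the class of wavefunctions used in the article, is that $\partial^j_\phi$ is a derivation acting only on the $j$-th tensor factor, and that the admissible gauge phase splits additively as $\chi(\theta,\phi) = \chi_0(\theta) + \chi_1(\phi) + \chi_2(\phi)$, with $\chi_j$ the azimuthal phase carried by subsystem $j$. Granting this, the computation reduces to a single Leibniz expansion.

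Concretely, I would start from the definition $\tilde{\mathcal{A}}^j_\phi = i\langle\tilde{\psi}|\partial^j_\phi|\tilde{\psi}\rangle$ and insert $|\tilde{\psi}\rangle = e^{i\chi}|\psi\rangle$. Writing $e^{i\chi} = e^{i\chi_0(\theta)}e^{i\chi_1(\phi)}e^{i\chi_2(\phi)}$ (the scalar phases commute) and applying the product rule for $\partial^j_\phi$, I would use that (i) the global $\theta$-phase $\chi_0$ is annihilated by any $\phi$-derivative, and (ii) among the two azimuthal phases only $\chi_j$ is seen by the subsystem-$j$ derivation, so that
\be
\partial^j_\phi\!\left(e^{i\chi}|\psi\rangle\right) = e^{i\chi}\left[i\,(\partial_\phi\chi_j)\,|\psi\rangle + \partial^j_\phi|\psi\rangle\right].
\ee
Contracting with $i\langle\tilde{\psi}| = i\langle\psi|e^{-i\chi}$, the unitary phase cancels, the first term collapses by normalization $\langle\psi|\psi\rangle = 1$ to give $-\partial_\phi\chi_j$, and the second reproduces $\mathcal{A}^j_\phi = i\langle\psi|\partial^j_\phi|\psi\rangle$. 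This is exactly the asserted relation.

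The step I expect to be the genuine obstacle — everything else being formal — is justifying point (ii): that $\partial^j_\phi$ extracts precisely $\partial_\phi\chi_j$ and is blind to $\chi_{k\neq j}$ and to $\chi_0$. This is not automatic for an entangled state, since a single scalar phase $e^{i\chi}$ a priori couples to both factors; what rescues the argument is that in the article's wavefunction class the gauge freedom is \emph{local}, i.e.\ each $\chi_k(\phi)$ is attached to the $k$-th subsystem and commutes through the remaining factors, exactly as the pole identities (\ref{eq:pole0})--(\ref{eq:pole2}) were verified directly from the explicit state. I would therefore check this locality on the explicit form of $|\psi\rangle$, and, as a consistency test, confirm that summing the two relations over $j=1,2$ returns the standard single-connection law $\tilde{\mathcal{A}}_\phi = \mathcal{A}_\phi - \partial_\phi\chi$ (using $\partial_\phi\chi_0 = 0$), thereby recovering the ordinary Berry-connection gauge transformation as the total.
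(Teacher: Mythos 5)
Your proof is correct and follows essentially the same route as the paper's: the paper defines the admissible gauge transformations by attaching $e^{i\chi_1(\phi)}$ and $e^{i\chi_2(\phi)}$ to the first and second tensor factors of $|\psi\rangle=\sum_{kl}c_{kl}(\theta)|\Phi_k(\phi)\rangle_1|\Phi_l(\phi)\rangle_2$ — precisely the ``locality'' you flag as the crucial input — and then performs the same Leibniz expansion followed by the normalization step $\sum_{kl}|c_{kl}(\theta)|^2=1$. Your remark that the split of $\chi$ into $\chi_1+\chi_2$ must be fixed, each decomposition counting as a distinct gauge, is exactly the caveat the paper states before its computation, so there is no gap.
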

\begin{proof}
Here, we refer to the class of wavefunctions in the article:
\be
|\psi\rangle=\sum_{k l} c_{k l}(\theta)\left|\Phi_{k}(\phi)\right\rangle_{1}\left|\Phi_{l}(\phi)\right\rangle_{2}.
\ee
Since the wavefunction must take this form at all points on the sphere, we need only consider the set of gauge transformations that preserve this form (as pointed out in the article, the $j$th Berry connection is well-defined as long as we stay in this sector), which means
\be
|\psi\rangle\rightarrow|\tilde{\psi}\rangle\equiv\sum_{kl}e^{i\chi(\theta)}c_{kl}(\theta)e^{i\chi_1(\phi)}|\Phi_k(\phi)\rangle_1e^{i\chi_2(\phi)}|\Phi_l(\phi)\rangle_2.
\ee
Note that the decomposition of $\chi$ into $\chi_1$ and $\chi_2$ is not unique. Each different decomposition should be regarded as a different gauge choice.
It suffices to just consider the $\phi$ component of the Berry connection:
\begin{eqnarray}
\tilde{\mathcal{A}}^{j}_{\phi}(\theta,\phi)&=&i\langle\tilde{\psi}|\partial^j_\phi|\tilde{\psi}\rangle\\
&=&-\sum_{kl}|c_{kl}(\theta)|^2\partial_\phi\chi_j(\phi)+ \mathcal{A}^{j}_{\phi}(\theta,\phi)\\
&=&-\partial_\phi\chi_j(\phi) + \mathcal{A}^{j}_{\phi}(\theta,\phi).\label{eq:gaugetrans}
\end{eqnarray}
In the last line we used the normalization of the wavefunction $\sum_{kl}|c_{kl}(\theta)|^2=1$.
\end{proof}

\begin{corollary}
There exists a north gauge such that $\mathcal{A}^j_{N\phi}(0,\phi)=0$ and a south gauge such that $\mathcal{A}^j_{S\phi}(\pi,\phi)=0$.
\end{corollary}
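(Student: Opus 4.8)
The plan is to produce the required gauge constructively, by applying Lemma~\ref{lem} to an arbitrary north gauge $|\psi_N\rangle$ with a phase tailored to cancel the partial connection at the north pole. Concretely, I would take $\chi_0(\theta)=0$ and choose
\be
\chi_j(\phi)=\int_0^\phi \mathcal{A}^j_{N\phi}(0,\phi')\,d\phi',\qquad j=1,2,
\ee
so that $\partial_\phi\chi_j(\phi)=\mathcal{A}^j_{N\phi}(0,\phi)$. Feeding this into the transformation law \eqref{eq:gaugetrans} immediately gives, at $\theta=0$,
\be
\tilde{\mathcal{A}}^j_{N\phi}(0,\phi)=-\partial_\phi\chi_j(\phi)+\mathcal{A}^j_{N\phi}(0,\phi)=0,
\ee
which is exactly the north condition. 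The south statement follows identically, replacing $\theta=0$ by $\theta=\pi$, integrating $\mathcal{A}^j_{S\phi}(\pi,\phi')$, and invoking \eqref{eq:pole1} in place of \eqref{eq:pole0}.

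The step I expect to carry the real content is verifying that this $\chi$ is an \emph{admissible} gauge transformation, i.e.\ that the resulting two-particle state $|\tilde\psi_N\rangle$ is still single-valued in $\phi$ and hence a genuine north gauge. This is where the pole constraint \eqref{eq:pole0} does the work: the net azimuthal phase acquired by the full wavefunction is governed by $\chi_1(\phi)+\chi_2(\phi)=\int_0^\phi[\mathcal{A}^1_{N\phi}(0,\phi')+\mathcal{A}^2_{N\phi}(0,\phi')]\,d\phi'$, whose integrand vanishes identically by \eqref{eq:pole0}. Thus $\chi_1+\chi_2\equiv 0$, the total phase $e^{i(\chi_1+\chi_2)}$ is trivial, and $|\tilde\psi_N\rangle$ coincides with $|\psi_N\rangle$ as a physical state; the transformation merely redistributes phase between the two subsystems, so single-valuedness at the pole is automatically preserved and the change is a legitimate partial gauge change within the class of Lemma~\ref{lem}.

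I would close by recording two consistency checks. First, since $\mathcal{A}^j_{N\phi}(0,\phi)$ is a single-valued function of $\phi$, its antiderivative $\chi_j$ has periodic $\phi$-derivative, so the shifted connection $\tilde{\mathcal{A}}^j_{N\phi}(\theta,\phi)$ remains single-valued over the whole sphere, consistent with \eqref{eq:pole2}. Second, the winding of the individual subsystem phase, $\oint \partial_\phi\chi_j\,d\phi=\oint\mathcal{A}^j_{N\phi}(0,\phi)\,d\phi$, need not be an integer multiple of $2\pi$---this is precisely the fractional content of the partial connection---but \eqref{eq:pole0} forces the two windings to be equal and opposite, so only the total (integer) winding is physical. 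The main obstacle is therefore not the cancellation itself, which is immediate from Lemma~\ref{lem}, but confirming that enforcing $\tilde{\mathcal{A}}^j_{N\phi}(0,\phi)=0$ does not spoil the single-valuedness defining $|\psi_N\rangle$; the constraint \eqref{eq:pole0} is exactly what rescues it.
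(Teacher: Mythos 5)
Your proposal is correct and is essentially the paper's own argument: both apply the transformation law of Lemma~\ref{lem} with phases whose $\phi$-derivatives cancel the pole connections, and both use Eq.~\eqref{eq:pole0} to conclude $\chi_1+\chi_2=0$ so the state remains in the north (resp.\ south) gauge sector. Indeed, by Eq.~\eqref{eq:pole0} your symmetric choice $\chi_j=\int_0^\phi\mathcal{A}^j_{N\phi}(0,\phi')\,d\phi'$ coincides (up to constants) with the paper's choice $\chi_1=-\chi_2=-\int d\phi\,\mathcal{A}^2_{N\phi}(0,\phi)$, so the two constructions are the same gauge transformation written differently.
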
 
\begin{proof}
For any given north gauge, $|\psi_N\rangle$ with Berry connections $\mathcal{A}^j_{N\phi}$, we know that $|\psi_N(\theta=0)\rangle$ is independent of $\phi$ by definition. Choose 
\begin{eqnarray}
\chi_1(\phi)&=&-\chi_2(\phi)=-\int d\phi\mathcal{A}_{N\phi}^2(\theta=0,\phi).\label{eq:cor_chi}
\end{eqnarray}
Then using Eq.~\eqref{eq:gaugetrans}, the new connection given by this gauge transform is
\begin{eqnarray}
\tilde{\mathcal{A}}^{1}_{\phi}(\theta=0,\phi)&=&\mathcal{A}^{1}_{\phi}(\theta=0,\phi)+\mathcal{A}^{2}_{\phi}(\theta=0,\phi) = 0\\
\tilde{\mathcal{A}}^{2}_{\phi}(\theta=0,\phi)&=&\mathcal{A}^{2}_{\phi}(\theta=0,\phi)-\mathcal{A}^{2}_{\phi}(\theta=0,\phi) = 0.
\end{eqnarray}
The first line is zero by Eq.~\eqref{eq:pole0}. Since $\chi=\chi_1+\chi_2=0$, we have not left the north gauge sector. The same construction can be used for the south gauge, so indeed we can always find north and south gauges such that
\be
\mathcal{A}^j_{N\phi}(0,\phi)=0,\;\;\mathcal{A}^j_{S\phi}(\pi,\phi)=0.\label{eq:cor_poles}
\ee
Note that these particular gauges are the ones used to define $\b{\mathcal{A}}'$ in the previous section.
\end{proof}
Lemma~\ref{lem} gives a simple expression for the partial Chern number:
\begin{eqnarray}
\mathcal{C}^j
&=&-\frac{1}{2\pi}(\chi^j(2\pi)-\chi^j(0)).
\end{eqnarray}
For the traditional Chern number $\mathcal{C}$, integer quantization follows from the fact that $\chi(\phi)=\chi(\phi+2\pi)+2\pi n$ for integer $n$. The same condition applies here, only now it's $\chi_1(\phi)+\chi_2(\phi)=\chi_1(\phi+2\pi)+\chi_2(\phi+2\pi)+2\pi n$, so that fractional values of $\mathcal{C}^j$ are allowed.

There is another way to write $C^j$. Starting from Eq.~\eqref{eq:northsouth} where the sphere has been split along a line of constant $\theta=\theta_c$, and using the particular gauges of Eq.~\eqref{eq:cor_poles}, we have
\begin{eqnarray}
C^j&=&-\frac{1}{2\pi} \int_{\rm north}d^{2} \b{n}\cdot( \nabla\times \b{\mathcal{A}}^j_N)-\frac{1}{2 \pi} \int_{\rm south} d^{2}\b{n}\cdot(\nabla \times \b{\mathcal{A}}^j_S)\\
&=& -\frac{1}{2\pi}\int_0^{2\pi}d\phi(\mathcal{A}^j_{N\phi}(\theta_c,\phi)-\mathcal{A}^j_{S\phi}(\theta_c,\phi)),
\end{eqnarray}
where we applied Stokes's theorem since the Berry connections in the north and south gauges are smooth over their respective hemisphere.
Since the lemma guarantees that the difference between the two Berry connections of different gauges is independent of $\theta$, we may set $\theta=\pi$ (we could also choose $\theta=0$, the proof goes the same either way):
\begin{eqnarray}
\mathcal{C}^j&=&-\frac{1}{2\pi}\int_0^{2\pi}d\phi(\mathcal{A}^j_{N\phi}(\pi,\phi)-\mathcal{A}^j_{S\phi}(\pi,\phi))\\
&=&-\frac{1}{2\pi}\int_0^{2\pi}d\phi\mathcal{A}^j_{N\phi}(\pi,\phi)\\
&=&-\mathcal{A}^j_{N\phi}(\pi),\label{eq:ChernN}
\end{eqnarray}
where we used Eq.~\eqref{eq:pole2} in the third line. This is possibly the simplest expression for the Chern number, but it requires computing everything in a specifically defined gauge. We prefer to write an expression that is explicitly gauge-independent relating to the geometrical argument of the preceding section. We can do this by adding zero to the above expression in the form
\be
\mathcal{C}^j=-\mathcal{A}^j_{N\phi}(\pi)+\mathcal{A}^j_{N\phi}(0).
\ee
Now we have an expression that involves the difference between a Berry connection at two different angles. But from the lemma, we know that such a difference is independent of gauge.
Therefore we again obtain
\be
\mathcal{C}^j=-\mathcal{A}^j_{\phi}(\pi)+\mathcal{A}^j_{\phi}(0)\label{eq:final}
\ee
in any gauge.

\subsection{Examples}
\label{cases2}

It's worthwhile checking some of the above equations in some simple examples.
\subsubsection{Product state}
Consider the $\tilde{r}=0$ case and $M_i=0$ where the ground state is the product state. The north and south gauges are:
\begin{eqnarray}
\label{North}
|\psi_N\rangle&=&\begin{pmatrix}
\cos(\theta/2) \\
e^{i\phi}\sin(\theta/2)\\
\end{pmatrix}
\otimes
\begin{pmatrix}
\cos(\theta/2) \\
e^{i\phi}\sin(\theta/2)
\end{pmatrix},
\end{eqnarray}
\begin{eqnarray}
\label{South}
|\psi_S\rangle&=&\begin{pmatrix}
e^{-i\phi}\cos(\theta/2) \\
\sin(\theta/2)\\
\end{pmatrix}
\otimes
\begin{pmatrix}
e^{-i\phi}\cos(\theta/2) \\
\sin(\theta/2)
\end{pmatrix},
\end{eqnarray}
so that $\mathcal{A}^1=\mathcal{A}^2$ and
\be
\mathcal{A}^j_{N\phi}=-\sin^2(\theta/2),\;\; \mathcal{A}^j_{S\phi}=\cos^2(\theta/2).
\ee
In the article, we present the proof with $\mathcal{A}^j_{N\phi}=-\sin^2(\theta/2)$ implying that $\mathcal{A}^j_{N\phi}=0$ at the north pole and $\mathcal{A}^j_{N\phi}=-1$ at the south pole. At the equator,
we also find  $\mathcal{A}^j_{N\phi}=-1/2$.

In either case, the Berry curvature is $\mathcal{F}_{\phi\theta}d\theta d\phi=\frac{\sin\theta}{2}d\theta d\phi$, which gives partial Chern numbers of
\be
\mathcal{C}^j=\frac{1}{2\pi}\int_0^{2\pi}\int_0^\pi d\phi d\theta \frac{\sin\theta}{2} = 1.
\ee
This equation naturally relates to the magnetizations at the poles since
\be
\mathcal{C}^j= \int_0^{\pi} d\theta \frac{\sin\theta}{2} = \frac{1}{2}[-\cos\theta]_0^{\pi} = \frac{1}{2}\left(\langle \sigma_z(\theta=0)\rangle - \langle \sigma_z(\theta=\pi)\rangle\right) = +1.
\ee

If we move to a different gauge
\begin{eqnarray}
|\psi\rangle
&=&e^{i\chi_0(\theta)}\begin{pmatrix}
e^{i\chi_1(\phi)}\cos(\theta/2) \nonumber \\
e^{i(\chi_1(\phi)+\phi)}\sin(\theta/2)\\
\end{pmatrix}
\otimes
\begin{pmatrix}
e^{i\chi_2(\phi)}\cos(\theta/2) \nonumber \\
e^{i(\chi_2(\phi)+\phi)}\sin(\theta/2)
\end{pmatrix},
\end{eqnarray}
then 
\begin{eqnarray}
\mathcal{A}^j
&=&-\partial_\phi\chi_j(\phi)+\mathcal{A}^j_{N\phi},
\end{eqnarray}
in agreement with Eq.~\eqref{eq:gaugetrans}.

Finally, at the poles, we have
\begin{eqnarray}
\mathcal{A}^1_{N\phi}(\theta=0)=0,&\;\;& \mathcal{A}^1_{N\phi}(\theta=\pi)=-1,\\
\mathcal{A}^2_{S\phi}(\theta=0)=1,&\;\;& \mathcal{A}^2_{S\phi}(\theta=\pi)=0.
\end{eqnarray}
So that both Eq.~\eqref{eq:ChernN} and Eq.~\eqref{eq:final} give $\mathcal{C}^j=1$.
\subsubsection{Fractional topology state}
Now consider the case where the ground state evolves from the product state $|\uparrow\rangle_1|\uparrow\rangle_2$ to the entangled state $\frac{1}{\sqrt{2}}(|\uparrow\rangle_1|\downarrow\rangle_2+|\downarrow\rangle_1|\uparrow\rangle_2)$. In this case, we don't know the form of the wavefunction over the whole sphere, but we can still check the poles. The gauge used in the article corresponds to a north gauge since the ground state is single-valued at the north pole
\be
|\psi_N(\theta=0)\rangle=|\uparrow\rangle_1|\uparrow\rangle_2,\label{eq:north_entangle1}
\ee 
while at the south pole it is given by
\be
|\psi_N(\theta=\pi)\rangle=\frac{1}{\sqrt{2}}\bigg(|\uparrow\rangle_1(e^{i\phi}|\downarrow\rangle_2)+(e^{i\phi}|\downarrow\rangle_1)|\uparrow\rangle_2\bigg),\label{eq:north_entangle2}
\ee
so that
\begin{eqnarray}
\mathcal{A}^1_{N\phi}(\theta=0)=0,&\;\;& \mathcal{A}^1_{N\phi}(\theta=\pi)=-1/2,\\
\mathcal{A}^2_{N\phi}(\theta=0)=0,&\;\;& \mathcal{A}^2_{N\phi}(\theta=\pi)=-1/2.
\end{eqnarray}
Both Eqs.~\eqref{eq:ChernN} and \eqref{eq:final} then give $\mathcal{C}^1=\mathcal{C}^2=1/2$. This leads to the physical interpretation of the fractional Chern number in Fig. 1 of the article if we 
use the arguments of the preceding sub-section. The system is topologically in a superposition of two geometries, one enclosing the topological charge and another geometry which is topologically trivial. 

On the other hand, we can find a south gauge by multiplying $|\psi_N\rangle$ by $e^{-i\phi}$, i.e. $\chi=-\phi$. Now we have the additional gauge freedom of choosing the decomposition into $\chi_1(\phi)$ and $\chi_2(\phi)$. 
For example, suppose we choose $\chi_1=-\phi$, $\chi_2=0$. Then
\begin{eqnarray}
|\psi_S(\theta=0)\rangle&=&e^{-i\phi}|\uparrow\rangle_1|\uparrow\rangle_2,\\
|\psi_S(\theta=\pi)\rangle&=&\frac{1}{\sqrt{2}}\bigg((e^{-i\phi}|\uparrow\rangle_1)(e^{i\phi}|\downarrow\rangle_2)+|\downarrow\rangle_1|\uparrow\rangle_2\bigg),
\end{eqnarray}
so that
\begin{eqnarray}
\mathcal{A}^1_{S\phi}(\theta=0)=1,&\;\;& \mathcal{A}^1_{S\phi}(\theta=\pi)=1/2,\\
\mathcal{A}^2_{S\phi}(\theta=0)=0,&\;\;& \mathcal{A}^2_{S\phi}(\theta=\pi)=-1/2,
\end{eqnarray}
which of course gives $\mathcal{C}^1=\mathcal{C}^2=1/2$. Note that $\mathcal{A}^1_{S\phi}(\theta=\pi)+\mathcal{A}^2_{S\phi}(\theta=\pi)=0$ in accordance with Eq.~\eqref{eq:pole1}, but this is not the particular south gauge that satisfies Eq.~\eqref{eq:cor_poles}. To obtain that gauge we follow the construction in Eq.~\eqref{eq:cor_chi}:
\begin{eqnarray}
\chi_1(\phi)=-\chi_2(\phi)&=&-\int d\phi \mathcal{A}^2_{S\phi}(\theta=\pi)\\
&=&\frac{\phi}{2} +c,
\end{eqnarray}
where $c$ is an arbitrary integration constant. With this transformation, the wavefunction becomes
\begin{eqnarray}
|\psi_S(\theta=0)\rangle&\rightarrow&(e^{-i\phi/2}|\uparrow\rangle_1)(e^{-i\phi/2}|\uparrow\rangle_2),\\
|\psi_S(\theta=\pi)\rangle&\rightarrow&\frac{1}{\sqrt{2}}\bigg((e^{-i\phi/2}|\uparrow\rangle_1)(e^{i\phi/2}|\downarrow\rangle_2)+(e^{i\phi/2}|\downarrow\rangle_1)(e^{-i\phi/2}|\uparrow\rangle_2)\bigg),
\end{eqnarray}
for which
\begin{eqnarray}
\mathcal{A}^1_{S\phi}(\theta=0)=1/2,&\;\;& \mathcal{A}^1_{S\phi}(\theta=\pi)=0,\\
\mathcal{A}^2_{S\phi}(\theta=0)=1/2,&\;\;& \mathcal{A}^2_{S\phi}(\theta=\pi)=0,
\end{eqnarray}
which satisfies both Eq.~\eqref{eq:cor_poles} and $\mathcal{C}^1=\mathcal{C}^2=1/2$. This choice of gauge leads to the physical picture of Fig. 2 (top) in the article where the massive Dirac point
carries the $\pi$ Berry phase and the semimetal ring region participates in the entanglement entropy, and leads to a ``zero'' topological response. 

These examples have all used north or south gauges, but it's important to emphasize that Eq.~\eqref{eq:final} holds for any gauge. As a final example, consider starting with \eqref{eq:north_entangle1} and \eqref{eq:north_entangle2}, and applying the transform $\chi_1=-\phi/3$, $\chi_2=\phi/2$. The wavefunction
\begin{eqnarray}
|\psi(\theta=0)\rangle&=&(e^{-i\phi/3}|\uparrow\rangle_1)(e^{i\phi/2}|\uparrow\rangle_2),\\
|\psi(\theta=\pi)\rangle&=&\frac{1}{\sqrt{2}}\bigg((e^{-i\phi/3}|\uparrow\rangle_1)(e^{3i\phi/2}|\downarrow\rangle_2)+(e^{2i\phi/3}|\downarrow\rangle_1)(e^{i\phi/2}|\uparrow\rangle_2)\bigg),
\end{eqnarray}
is clearly not single-valued at any pole, but its Berry connections
\begin{eqnarray}
\mathcal{A}^1_{\phi}(\theta=0)=1/3,&\;\;& \mathcal{A}^1_{\phi}(\theta=\pi)=-1/6,\\
\mathcal{A}^2_{\phi}(\theta=0)=-1/2,&\;\;& \mathcal{A}^2_{\phi}(\theta=\pi)=-1,
\end{eqnarray}
still give $\mathcal{C}^1=\mathcal{C}^2=1/2$.


\section{Generalized interactions}\label{sec:anisotropic}

We now consider a general spin model with both in-plane and out-of-plane coupling:
\be
\mathcal{H}=-\b{H}_1\cdot\b{\sigma}_1-\b{H}_2\cdot\b{\sigma}_2+r_z\sigma_1^z\sigma_2^z+r_{xy}(\sigma_1^x\sigma_2^x+\sigma_1^y\sigma_2^y).
\ee
We anticipate that fractional invariants occur only along the line of symmetry $M_1=M_2$, and we set $\b{H}_1=\b{H}_2=(H_x,0,H_z)$, focusing on sweeps along the $\phi=0$ meridian. This Hamiltonian admits a nice singlet-triplet representation:
\be
\mathcal{H}=-(2r_{xy}+r_z)|00\rangle\langle00| +\mathcal{H}_{\rm trip},\label{eq:xyst}
\ee
where
\be
\mathcal{H}_{\rm trip}=\begin{pmatrix}
-2H_z+r_z & -\sqrt{2}H_x & 0 \\
-\sqrt{2}H_x & 2r_{xy}-r_z & -\sqrt{2}H_x\\
0 & -\sqrt{2}H_x & 2H_z+r_z
\end{pmatrix},\label{eq:xyt}
\ee
in the $(|1,1\rangle, |1,0\rangle, |1,-1\rangle)$ basis. Here $H_x=H\sin\theta$ and $H_z=H\cos\theta+M$. Regardless of the spin coupling, we see that the singlet state is decoupled from the triplet states. If it is not the ground state at the north pole, then it will not contribute to the sweep across the sphere. At the poles, we have the energies shown in table~\ref{table:energies}.
\begin{table}[h]
\centering
\begin{tabular}{ l | c | c | c | c | c | }
\centering                   
 Pole & $|0,0\rangle$ & $|1,1\rangle$ & $|1,0\rangle$ & $|1,-1\rangle$\\
 \hline
North & $-2r_{xy}-r_z$ & $-2(H+M)+r_z$ & $2r_{xy}-r_z$ & $2(H+M)+r_z$ \\
South & $-2r_{xy}-r_z$ & $-2(M-H)+r_z$ & $2r_{xy}-r_z$ & $2(M-H)+r_z$ \\
\end{tabular}
\caption{Energies at the poles for each of the singlet and triplet states.}\label{table:energies}
\end{table}

We choose the definition $H,M>0$. Of the 16 possible transitions between these four states, six are ruled out by the singlet-triplet decoupling, and five are ruled out by energetics. The remaining five are
\begin{eqnarray}
|0,0\rangle &\rightarrow& |0,0\rangle\;\;\Rightarrow\mathcal{C}^j=0\nonumber\\
|1,1\rangle &\rightarrow& |1,1\rangle\;\;\Rightarrow\mathcal{C}^j=0\nonumber\\
|1,1\rangle &\rightarrow& |1,0\rangle\;\;\Rightarrow\mathcal{C}^j=1/2\nonumber\\
|1,1\rangle &\rightarrow& |1,-1\rangle\;\;\Rightarrow\mathcal{C}^j=1\nonumber\\
|1,0\rangle &\rightarrow& |1,0\rangle\;\;\Rightarrow\mathcal{C}^j=0.
\end{eqnarray}

Let's focus on the $\mathcal{C}^j=1/2$ transition. At the north pole, this requires
\be
r_z-H-M<r_{xy}<H+M-r_z,
\ee
while at the south pole, this requires
\be
r_{xy}<r_z+H-M,\;\;r_{xy}<r_z-H+M.
\ee
Generically then, we find the fractional Chern number in the region
\be\label{eq:halfregion}
r_z-H-M<r_{xy}<{\rm min}\{H+M-r_z,r_z-|H-M|\},
\ee

in agreement with the $r_{xy}=0$ case studied in the main text. We now study other special cases of this model.
\begin{itemize}
%
\item XY model: $r_z=0$. In this case, the condition for $\mathcal{C}^j=1/2$ becomes
	\be
	-H-M<r_{xy}<-|H-M|.
	\ee
	We see that the XY model admits a fractional Chern number, but only for ferromagnetic in-plane coupling. This system was studied experimentally in Ref.~[12] in the article for the case $M=0$, which did not allow for the observation of a fractional Chern number. Without the inversion symmetry-breaking term, the fractional Chern phase collapses to a point at $r_{xy}=-H$ and therefore was not observed in that experiment.
	
\item Heisenberg model: $r_z=r_{xy}\equiv r$. Condition \eqref{eq:halfregion} yields three conditions
	\be
	H+M>0,\;\; r<\frac{H+M}{2},\;\; |H-M|<0.
	\ee
	The last one is impossible, so we see that the Heisenberg model does not admit a fractional Chern number. However, anisotropy rescues the $\mathcal{C}^j=1/2$ phase. For example, if we take $r_{xy}=r=-r_z$, then condition \eqref{eq:halfregion} becomes
	\be
	-\frac{(H+M)}{2}<r<\frac{|H-M|}{2},	
	\ee
	which is readily satisfied. 
		
\item Inversion symmetric: $M=0$. The region \eqref{eq:halfregion} collapses to a truncated line.
	\be
	r_{xy}=r_z-H<0.
	\ee
\end{itemize}

The topological phase diagram in the plane of the couplings is shown in Supplementary Fig.~\ref{fig:anis_phase}.
For $M<H$, the fractional Chern number phase forms a wedge between the $\mathcal{C}^j=0$ and $\mathcal{C}^j=1$ phases, while for $M>H$, the $\mathcal{C}^j=1$ phase is taken over by $\mathcal{C}^j=0$.

In the main article, we found that we could extend the fractional Chern region by breaking inversion symmetry, either through the inclusion of a constant offset $M\neq0$, or an asymmetric coupling $r_z\rightarrow r_zf(\theta)$. If we do the same in the generic anisotropic model (with $r_{xy}\rightarrow r_{xy}g(\theta)$),  we can also get an extended region of $\mathcal{C}^j=1/2$. It is redundant to include both inversion-symmetry breaking mechanisms, so we set $M=0$ in the following. In that case, the condition for a $|1,1\rangle\rightarrow|1,0\rangle$ transition relaxes to
\begin{eqnarray}
r_zf(0)-H&<&r_{xy}g(0)<H-r_zf(0),\\
r_{xy}g(\pi)&<&r_zf(\pi)-H.
\end{eqnarray}
However, the $\theta$-dependent interactions also open up the possibility of a transition from  $|1,0\rangle$ to $|1,-1\rangle$ (the transition $|1,0\rangle\rightarrow|1,1\rangle$ is forbidden for $M=0$). This occurs when
\begin{eqnarray}
r_{xy}g(0)&<&{\rm min}\{0,-H+r_zf(0)\}\\
H+r_zf(\pi)&<&r_{xy}g(\pi).
\end{eqnarray}

Some examples of interactions that break inversion symmetry are:
\begin{itemize}
\item $f(\theta)=g(\theta)=\pi-|\pi-\theta|$:
	The fractional Chern number appears in the entire half-plane
	\be
	r_{xy}<r_z-\frac{H}{\pi}.
	\ee
\item $f(\theta)=g(\theta)=\sin(\theta-\pi/2)$:
	Intriguingly, the fractional Chern number appears in two separated domains:
	\be
	-H-r_z<r_{xy}<r_z-H,
	\ee
	and 
	\be
	r_{xy}>{\rm max}\{0,r_z+H\}.
	\ee
	This unusual phase diagram is shown in Supplementary Fig.~\ref{fig:phasesin}.
	
\end{itemize}


\section{$N$ spins}\label{sec:nspins}

Here we discuss the fractional phase for more than two spins. Without specifying any details of the model, we know that the fractional Chern number arises when the ground state changes from a ferromagnet at the north pole to a degenerate antiferromagnet at the south pole. This can be achieved through inversion-symmetry-breaking masses or $\theta$-dependent interactions as we have seen. The key observation in the case of two spins was that the presence of an infinitesimal transverse field breaks the ground-state degeneracy near the south pole and favours the entangled state. Specifically, in the subspace of south-pole ground states for the Ising-coupled model with $M_1=M_2$: $\mathcal{D}\equiv\{|\uparrow\downarrow\rangle,|\downarrow\uparrow\rangle\}$, with energy $E_D=-r_z=-\tilde{r}$, we apply second-order degenerate perturbation theory
\be
\mathcal{H}_{\rm eff}=P\mathcal{H}P+P\mathcal{H}'\frac{1-P}{E_D-H_0}\mathcal{H}'P,
\ee
where $P\equiv\sum_{\alpha\in\mathcal{D}}|\alpha\rangle\langle\alpha|$ is the projection operator on the south-pole ground-state subspace, $H_0=(1-P){\cal H}(1-P)$ and
\be
\mathcal{H}'\equiv -H\sin\theta(\sigma_1^x+\sigma_2^x),\label{eq:Hx}
\ee
In the Hilbert space formed by $\{ |\uparrow \downarrow\rangle , |\downarrow \uparrow\rangle\}$, this yields the effective perturbation
\be
\mathcal{H}_{\rm eff}=-\frac{H^2\sin^2\theta\tilde{r}}{\tilde{r}^2-(H-M)^2}\begin{pmatrix}
1 & 1 \\
1 & 1\\
\end{pmatrix},
\ee  
whose unique ground state $\frac{1}{\sqrt{2}}(|\uparrow\downarrow\rangle+|\downarrow\uparrow\rangle)$. In $\mathcal{H}_{\rm eff}$, we  take into account the two possible states $|\uparrow \uparrow\rangle$ and 
$|\downarrow \downarrow\rangle$ in $(1-P)$ summing their two contributions. 

We can generalize this reasoning to chains with more spins. For four spins, all antiferromagnetically coupled, the south-pole ground states are the two N\'eel states. The degeneracy is preserved at second order in the perturbation, but fourth order spin-flip terms will choose the ground state $\frac{1}{\sqrt{2}}(|\uparrow\downarrow\uparrow\downarrow\rangle+|\downarrow\uparrow\downarrow\uparrow\rangle)$. This is a maximally entangled state, and an analogue of the Greenberger-Horne-Zeilinger state which again has $\mathcal{C}^j=1/2$ for all $j$. A chain of $2N$ spins requires $2N$ orders of perturbation theory to lift the degeneracy, so the gap near $\theta=\pi$ will be reduced as $(H\sin\theta)^{2N}$.

Let us now consider a different four-spin model corresponding to two Chern one-half systems with a weak transverse coupling. We would like to know if the fractional invariant is robust to this coupling. In the absence of the perturbative coupling, the ground state at the south pole is 
\be
\mathcal{D}=\{|\uparrow\downarrow\uparrow\downarrow\rangle, |\uparrow\downarrow\downarrow\uparrow\rangle, |\downarrow\uparrow\uparrow\downarrow\rangle, |\downarrow\uparrow\downarrow\uparrow\rangle\}.
\ee
If we couple the two systems at just one site via a term $\mathcal{H}'=r'\sigma^x_2\sigma^x_3$ with $r'=r_x$ as shown in Supplementary Fig.~\ref{fig:Nspins} (a), then the second order perturbation is
\be
\mathcal{H}_{\rm eff}=-\frac{r'^2}{4}{\rm diag}\bigg\{\frac{1}{\tilde{r}}, \frac{1}{\tilde{r}+H-M}, \frac{1}{\tilde{r}-H+M}, \frac{1}{\tilde{r}}\bigg\}.
\ee
In Supplementary Fig.~\ref{fig:Nspins} (b), the top left spin corresponds to spin 2 which refers to the first spin in the definition of of $\mathcal{D}$, the top right spin corresponds to spin 3 which refers to the second spin
in the definition of of $\mathcal{D}$, the bottom right spin corresponds to spin 4 referring to the third spin in the definition of $\mathcal{D}$ and the bottom left spin corresponds to spin 1 referring to the fourth
spin in the definition of $\mathcal{D}$. The two antiferromagnetic N\' eel ordered states on the square then correspond to the first and fourth states in $\mathcal{D}$. 

For $\tilde{r}>H-M$, this favours the state $|\downarrow\uparrow\uparrow\downarrow\rangle$ with an energy shift of $\frac{-r'^2}{4(\tilde{r}-H+M)}$. This is due to virtual excitations of the of the fully polarized state aligned with the magnetic field. In this case the fractional Chern number is destroyed and $\mathcal{C}^j=\{1,0,0,1\}$. This is because fractional $\mathcal{C}^j$ state is only protected by exchange symmetry of the spins. However, we can also construct a perturbative coupling that respects this symmetry, for example, by adding $r'\sigma^x_1\sigma_4^x$ as shown in Fig.~\ref{fig:Nspins} (b) of the Supplementary Material. In this case we have
\be
\mathcal{H}_{\rm eff}=-\frac{r'^2}{2}\begin{pmatrix}
\frac{1}{\tilde{r}} & 0 & 0 & \frac{1}{\tilde{r}}\\
0 & \frac{\tilde{r}}{\tilde{r}^2-(H-M)^2} & \frac{\tilde{r}}{\tilde{r}^2-(H-M)^2} & 0\\
0 & \frac{\tilde{r}}{\tilde{r}^2-(H-M)^2} & \frac{\tilde{r}}{\tilde{r}^2-(H-M)^2} & 0\\
\frac{1}{\tilde{r}} & 0 & 0 & \frac{1}{\tilde{r}}\\
\end{pmatrix}.
\ee
Upon diagonalization, this gives the unique ground state $\frac{1}{\sqrt{2}}(|\uparrow\downarrow\downarrow\uparrow\rangle+|\downarrow\uparrow\uparrow\downarrow\rangle)$, with $\mathcal{C}^j=1/2$.

Other fractional Chern numbers can appear if we introduce frustration in the system without breaking spin-exchange symmetry. For three antiferromagnetically coupled spins as shown in Fig.~\ref{fig:Nspins} (c) of the Supplementary Material, the ground state at $\theta=\pi$ is
\be
\mathcal{D}=\{|\downarrow\downarrow\uparrow\rangle,|\downarrow\uparrow\downarrow\rangle,|\uparrow\downarrow\downarrow\rangle\}.
\ee 
The transverse field $-H\sin\theta\sum_{i=1}^3\sigma_i^x$, yields the second-order perturbation
\begin{eqnarray}
\mathcal{H}_{\rm eff}&=&-\frac{H^2\sin^2\theta}{(H-M)(H-M-2\tilde{r})}\begin{pmatrix}
\frac{(H-M-4\tilde{r})}{2} & -\tilde{r} & -\tilde{r} \\
-\tilde{r} & \frac{(H-M-4\tilde{r})}{2} & -\tilde{r}\\
-\tilde{r} & -\tilde{r} & \frac{(H-M-4\tilde{r})}{2}\\
\end{pmatrix}.\nonumber\\
\end{eqnarray}
Upon diagonalizing, the ground state is $\frac{1}{\sqrt{3}}(|\downarrow\downarrow\uparrow\rangle+|\downarrow\uparrow\downarrow\rangle+|\uparrow\downarrow\downarrow\rangle)$. The corresponding partial Chern numbers are $\mathcal{C}^j=\frac{2}{3}$ for all $j$. 

This reasoning can be applied to higher numbers of spin as well. For odd $N>1$ spins, the $N$-fold degenerate space of frustrated antiferromagnets at the south pole is 
\be
\mathcal{D}=\{|\downarrow\downarrow\uparrow\downarrow\ldots\uparrow\rangle,|\uparrow\downarrow\downarrow\uparrow\downarrow\ldots\downarrow\rangle,\ldots,|\downarrow\uparrow\ldots\uparrow\downarrow\rangle\}.
\ee
Each of these states can be written as $|\alpha\rangle$ where $\alpha$ indexes the site location of the ferromagnetic pair. At second order in perturbation theory, for $N>3$, the transverse field has a diagonal contribution
\begin{eqnarray}
\langle\alpha|\mathcal{H}_{\rm eff}|\alpha\rangle&=&H^2\sin^2\theta\bigg(\frac{\tilde{r}}{2(H-M)(H-M-2\tilde{r})}+\frac{(H-M)-\tilde{r}(2N-8)}{2(4\tilde{r}^2-(H-M)^2)}\bigg),
\end{eqnarray}
from flipping a single spin twice. The only off-diagonal contributions come from flipping one spin in the ferromagnetic pair along with its neighbour outside the pair. This is equivalent to shifting the pair by two sites:
\begin{eqnarray}
\langle\alpha-2|\mathcal{H}_{\rm eff}|\alpha\rangle&=&H^2\sin^2\theta\bigg(\frac{\tilde{r}}{(H-M)(H-M-2\tilde{r})}\bigg)\nonumber\\
&=&\langle\alpha+2|\mathcal{H}_{\rm eff}|\alpha\rangle.
\end{eqnarray}
Thus the effective Hamiltonian describes a single particle hopping on a lattice. This is easily solved by Fourier transform, from which we get the ground state $\frac{1}{\sqrt{N}}\sum_{\alpha=1}^N|\alpha\rangle$. The corresponding spin expectation value for each site $j$ is
\be
\langle \sigma_j^z\rangle=-\frac{1}{N},
\ee
which gives the sequence of rational partial Chern numbers 
\be
\mathcal{C}^j=\frac{N+1}{2N},
\ee
with integer total Chern number $\mathcal{C}_{\rm tot}=\frac{N+1}{2}$. Note that in the large $N$ limit, the even and odd sectors converge to give $\mathcal{C}^j=1/2$.


\section{Monolayer and bilayer Haldane models on the honeycomb lattice}\label{app:haldane}

We employ the following definitions in our lattice model. We set the lattice spacing to $a=1$. The  honeycomb graphene Bravais lattice consists of $A$ and $B$ sites with primitive vectors
\begin{eqnarray}
\b{u}_1=\frac{1}{2}(3,\sqrt{3}),\;\;
\b{u}_2=\frac{1}{2}(3,-\sqrt{3}),
\end{eqnarray}
nearest-neighbour vectors
\begin{eqnarray}
\b{a}_1=\frac{1}{2}(1,\sqrt{3}),\;\;
\b{a}_2=\frac{1}{2}(1,-\sqrt{3}),\;\;
\b{a}_3=(-1,0),\nonumber\\
\end{eqnarray}
and next-nearest-neighbour vectors
\begin{eqnarray}
\b{b}_1=\frac{1}{2}(-3,\sqrt{3}),\;\;\nonumber
\b{b}_2=\frac{1}{2}(3,\sqrt{3}),\;\;\nonumber
\b{b}_3=(0,-\sqrt{3})\nonumber\\
\end{eqnarray} 
as shown in Supplementary Fig.~\ref{fig:Hex}.

The reciprocal lattice has a primitive cell defined by
\begin{eqnarray}
\b{v}_1=\frac{2\pi}{3}(1,\sqrt{3}),\;\;
\b{v}_2=\frac{2\pi}{3}(1,-\sqrt{3}).
\end{eqnarray}
The diamond formed by $\b{v}_1$, $\b{v}_2$ result in Fig. 2b) and 2c) in the article. 
Some important points in the Brillouin zone are
\begin{eqnarray}
K&=&\frac{2\pi}{3}\left(1,\frac{1}{\sqrt{3}}\right),\;\; K'=\frac{2\pi}{3}\left(1,-\frac{1}{\sqrt{3}}\right)\nonumber\\
M&=&\frac{2\pi}{3}(1,0),\;\;M'=\frac{\pi}{3}(1,\sqrt{3}),
\end{eqnarray}
as shown in Supplementary Fig.~\ref{fig:Hex2}.

For a single layer $i$, we start with the tight-binding Hamiltonian for graphene with nearest-neighbour hopping $t_1$ and Semenoff mass $M_i$, which is given by
\begin{eqnarray}
H_1 &=&t_1\sum_{\b{r}_A}\sum_{i=1}^3(c^\dagger_{B}(\b{r}_A+\b{a}_i)c_{A}(\b{r}_A) +h.c.)+ M_i\left(\sum_{\b{r}_A}c^\dagger_A(\b{r}_A)c_A(\b{r}_A)-\sum_{\b{r}_B}c^\dagger_B(\b{r}_B)c_B(\b{r}_B)\right).\nonumber\\
\end{eqnarray}
To construct the Haldane model, we add next-nearest-neighbour hopping $t_2$ with flux $\phi$ oriented as in Supplementary Fig.~\ref{fig:Hex3}, via the term
\begin{eqnarray}
H_2&=&t_2\sum_{i=1}^3\bigg(\sum_{\b{r}_A}c_A^\dagger(\b{r}_A)c_A(\b{r}_A+\b{b}_i)e^{i\phi}+\sum_{\b{r}_B}c_B^\dagger(\b{r}_B)c_B(\b{r}_B+\b{b}_i)e^{-i\phi}\bigg)+h.c.
\end{eqnarray}

Fourier transforming $H_1+H_2$ gives the single-layer Hamiltonian
\begin{eqnarray}
H&=&\sum_{\b{k}}\begin{pmatrix} c^\dagger_{A\b{k}} & c^\dagger_{B \b{k}}\end{pmatrix} h(\b{k}) \begin{pmatrix} c_{A\b{k}} \\ c_{B\b{k}}\end{pmatrix}\nonumber\\
h(\b{k})&=&\b{d}(\b{k})\cdot\b{\sigma}+\epsilon\mathbb{I},
\end{eqnarray}
where 
\begin{eqnarray}
d_x(\b{k})&=&-t_1\sum_{i=1}^3\cos(\b{k}\cdot\b{a}_i)\label{eq:dx}\\
d_y(\b{k})&=&-t_1\sum_{i=1}^3\sin(\b{k}\cdot\b{a}_i)\\
d_z(\b{k})&=&-2t_2\sin\phi\sum_{i=1}^3\sin(\b{k}\cdot\b{b}_i)+M\\
\epsilon&=&-2t_2\cos\phi\sum_{i=1}^3\cos(\b{k}\cdot\b{b}_i).\label{eq:dz}
\end{eqnarray}
At the Dirac points $K$, $K'$, we have
\be
\b{d}=d_z=\pm3\sqrt{3}t_2\sin\phi. 
\ee

Our bilayer model consists of two copies of the Haldane model for which we fix $\phi=\pi/2$ for simplicity and couple them with an interlayer hopping $r$:
\be
\mathcal{H}(\b{k})=\begin{pmatrix}
(\b{d}+M_1\hat{z})\cdot\b{\sigma} & r\mathbb{I}\\
r\mathbb{I} & (\b{d}+M_2\hat{z})\cdot\b{\sigma}
\end{pmatrix}.
\ee
{\it This corresponds to Eq. (14) in the article.}

From this we get the energy spectrum at the $K$ point,
\begin{eqnarray}\label{eq:bands1}
E_1(K)&=&-\frac{1}{2}(2|d_z|+M_1+M_2+\sqrt{(M_1-M_2)^2+4r^2})\\
E_2(K)&=&-\frac{1}{2}(2|d_z|+M_1+M_2-\sqrt{(M_1-M_2)^2+4r^2}),\\
E_3(K)&=&\frac{1}{2}(2|d_z|+M_1+M_2-\sqrt{(M_1-M_2)^2+4r^2}),\\
E_4(K)&=&\frac{1}{2}(2|d_z|+M_1+M_2+\sqrt{(M_1-M_2)^2+4r^2}).\label{eq:bands4}
\end{eqnarray}
The eigenvectors corresponding to these bands are
\begin{eqnarray}
\psi_1&=&N_x(0,x,0,1),\;\;
\psi_2=N_y(0,-y,0,1),\nonumber\\
\psi_3&=&N_y(y,0,1,0),\;\;
\psi_4=N_x(-x,0,1,0),\label{eq:bands_eigen}
\end{eqnarray}
where
\begin{eqnarray}
x&\equiv&\frac{1}{2r}(M_2-M_1-\sqrt{(M_1-M_2)^2+4r^2)},\\
y&\equiv&\frac{1}{2r}(M_1-M_2-\sqrt{(M_1-M_2)^2+4r^2)},\\
N_x&=&(1+x^2)^{-1/2},\\
N_y&=&(1+y^2)^{-1/2}.
\end{eqnarray}
We may obtain the energies at the $K'$ point through the transformation $|d_z|$ with $-|d_z|$ in Eqs.~\eqref{eq:bands1}-\eqref{eq:bands4}.

In the bilayer model, we may represent the ground state at half-filling in terms of the occupancy of each layer:
\be
|\psi\rangle=\sum_{i+j+k+l=2}c_{ijkl}|ij\rangle_1|kl\rangle_2.
\ee
In this representation, a ket $|ij\rangle_n$ is defined for layer $n$ with the two sublattices $A$ and $B$ such that $|10\rangle_n$ refers to a state with sublattice $A$ occupied in layer $n$ and $|01\rangle_n$ to a state with sublattice $B$ occupied in layer $n$. The subset of the Hilbert space with each layer half-filled corresponds to the constraint $i+j=1$. We get the reduced density matrix $\rho_1$ by tracing out one layer:
\be
\rho_1={\rm diag}(|c_{0011}|^2, |c_{1100}|^2, \rho_1^{\rm red}),
\ee
where the $2\times2$ block $\rho_1^{\rm red}$ describes the space of states where each layer is half-filled:
\be
\rho_1^{\rm red}=\begin{pmatrix}
|c_{0110}|^2+|c_{0101}|^2 & c^*_{0110}c_{1010}+c^*_{0101}c_{1001}\\
c_{0110}c^*_{1010}+c_{0101}c^*_{1001} & |c_{1010}|^2+|c_{1001}|^2\\
\end{pmatrix}.
\ee
From these coefficients, the entanglement entropy:
\be
S_1=-\rho_1\ln \rho_1,
\ee
is computed numerically.
\subsection{Edge states}
Here, we study the edge states in this bilayer model. We consider a finite ribbon geometry with $30$ sites in the $y$-direction and $50$-sites in the $x$-direction. Boundary conditions are open in $y$ and periodic in $x$. We evaluate the band structure and wavefunctions of the edge modes. We confirm that the topological phase diagram (Fig. 1b in the main text) has a region with two chiral edge modes, a region with one chiral edge mode and a fully gapped phase as shown in Supplementary Fig.~\ref{fig:edge1}. Along the line of symmetry $M_1=M_2$, we find that that the bulk gap closes due to the nodal ring, but a single chiral edge mode persists in the reciprocal space as shown in Supplementary Fig.~\ref{fig:edge2}. By evaluating the probability density of the wavefunction, we find that these modes are indeed uniformly split between the two layers. Deviating very slightly from the critical line $M_1=M_2$, we find that the edge mode remains delocalized in the two planes and then progressively redistributes in one plane only when increasing the mass asymmetry related to the blue region of the phase diagram of Fig. 1b) in the article. In Supplementary Fig.~\ref{DOS}, we also show the density of states coming from the nodal ring region in real space giving rise to additional delocalized modes for $M_1=M_2$.  

\section{Protocoles in time}\label{app:LZ}

\subsection{Single spin-1/2 model in time}

In this paper, we are interested in transition amplitudes of a two-state system at finite times, since the linear sweep protocol on a sphere takes place over a finite time. To that end, it is worth deriving the full time-dependent amplitudes for different states of the spin-1/2 Hamiltonian
\be\label{eq:LZham}
H=\lambda t \sigma_z+\Delta\sigma_x,
\ee
using Ref.~[46] as a guide.
The instantaneous eigenenergies and eigenstates of this system are:
\begin{eqnarray}
E_\pm&=&\pm\sqrt{\lambda^2t^2+\Delta^2}\equiv\pm E\\
|\psi_\pm\rangle&=&\frac{1}{\sqrt{\Delta^2+(E-\lambda t)^2}}\begin{pmatrix}
\lambda t\pm E \\
\Delta
\end{pmatrix}.
\end{eqnarray}
It is important to note that these eigenstates change character as $t$ goes from $-\infty$ to $+\infty$. Since $E\rightarrow\pm\lambda t$ as $t\rightarrow\pm\infty$, we have
\begin{eqnarray}
|\psi_-(-\infty)\rangle&=&\begin{pmatrix} 1 \\ 0\end{pmatrix} = |\psi_+(+\infty)\rangle\label{eq:spinasymp}\\
|\psi_-(+\infty)\rangle&=&\begin{pmatrix} 0 \\ 1\end{pmatrix} = |\psi_+(-\infty)\rangle.
\end{eqnarray}
In other words, if the evolution is adiabatic (i.e. we track the ground state as $t$ increases), then the spin will necessarily flip. The Landau-Zener result says that if the evolution is not completely adiabatic (in a sense we will soon make precise), then there is a significant probability of ending up in the excited state where the spin has not flipped. Also note that exactly at $t=0$, the eigenstates are equal combinations of up and down:
\be
|\psi_\pm(0)\rangle=\frac{1}{\sqrt{2}}\begin{pmatrix} \pm1 \\ \sgn(\Delta)\end{pmatrix}.
\ee

We wish to solve the time-dependent Schr\"odinger equation. We represent the quantum state as
\be
|\Psi(t)\rangle=A(t)|\uparrow\rangle+B(t)|\downarrow\rangle,
\ee
so that we have two coupled differential equations
\begin{eqnarray}
\dot{A}(t)+\frac{i\lambda}{\hbar}A(t)+\frac{i\Delta}{\hbar}B(t)&=&0\label{eq:SE1}\\
\dot{B}(t)+\frac{i\Delta}{\hbar}A(t)-\frac{i\lambda}{\hbar}tB(t)&=&0.\label{eq:SE2}
\end{eqnarray}
Differentiating the second equation gives
\begin{eqnarray}
\ddot{B}(t)+\frac{i\Delta}{\hbar}\dot{A}(t)-\frac{i\lambda}{\hbar}B(t)-\frac{i\lambda}{\hbar}t\dot{B}(t)&=&0.\label{eq:SE4}
\end{eqnarray}
Substituting Eqs. \eqref{eq:SE1} and \eqref{eq:SE2} into this gives
\begin{eqnarray}
\ddot{B}(t)+\left(\frac{\Delta^2}{\hbar}+\frac{\lambda^2t^2}{\hbar^2}-\frac{i\lambda}{\hbar}\right) B(t)=0.\label{eq:A2}
\end{eqnarray}
We can put this differential equation in the form of the Weber equation\footnote{see, for instance, NIST Digital Library of Mathematical Functions, http://dlmf.nist.gov/}, by using the dimensionless quantity
\be
z\equiv\sqrt{\frac{2\lambda}{\hbar}}e^{-i\pi/4}t,\label{eq:zoft}
\ee
so that
\be
\frac{d^2B(z)}{dz^2}+\left(\frac{i\Delta^2}{2\lambda\hbar}+\frac{1}{2}-\frac{z^2}{4}\right)B(z)=0,
\ee
which has the linearly independent solutions
\be
B(z)=c_1D_\nu(z)+c_2D_{-1-\nu}(-iz).\label{eq:Bsol}
\ee
Here we have defined $\nu\equiv i\frac{\Delta^2}{2\lambda\hbar}$, and $D_\nu(z)$ are the parabolic cylinder functions. From Eq.~\eqref{eq:SE2}, we also get the solution for $A$:
\begin{eqnarray}
A(t)&=&\frac{i\hbar}{\Delta}\dot{B}(t)+\frac{\lambda t}{\Delta}B(t)\\
\Rightarrow A(z)&=&\frac{\sqrt{2\lambda\hbar}}{\Delta}e^{i\pi/4}\left(B'(z)+\frac{z}{2}B(z)\right).
\end{eqnarray}

The initial condition ensures that the spin begins in the ground state at $t=-\infty$, which means that $B(t=-\infty)=0$ according to Eq.~\eqref{eq:spinasymp}. The second initial condition ($|A(t=-\infty)|^2=1$) will not be used just yet. 

One has to be careful with the asymptotics of the parabolic cylinder functions, since they have different behaviours depending the direction in which their argument goes to infinity. For $t\rightarrow-\infty$, Eq.~\eqref{eq:zoft} shows that $\arg(z)=3\pi/4$ and $\arg(-iz)=\pi/4$. One can check that the first term in Eq.~\eqref{eq:Bsol} diverges along the former axis, while the second term decays along the later. Thus $c_1=0$. We then have
\be
A(z)=\frac{\sqrt{2\lambda\hbar}}{\Delta}e^{i\pi/4}\left(-ic_2D'_{-1-\nu}(-iz)+\frac{1}{2}c_2D_{-1-\nu}(-iz)\right).
\ee 
Using the identity
\be
D'_n(z)=\frac{z}{2}D_n(z)-D_{n+1}(z),
\ee
this simplifies to
\be
A(z)=\frac{\sqrt{2\lambda\hbar}}{\Delta}e^{i3\pi/4}c_2D_{-\nu}(-iz).
\ee
Instead of solving for $c_2$ using the other initial condition, it is easier to use the probability normalization at time $t=0$:
\be
1=|A(0)|^2+|B(0)|^2,
\ee 
where the parabolic cylinder functions take the analytic form
\be
D_n(0)=\frac{2^{n/2}\sqrt{\pi}}{\Gamma(\frac{1-n}{2})}.\label{eq:Dn0}
\ee
We employ the following identities to simplify the gamma functions:
\begin{eqnarray}
\Gamma(1-x)\Gamma(x)&=&\frac{\pi}{\sin(\pi x)}\label{eq:gamma1}\\
\Gamma(1+x)&=&x\Gamma(x)\label{eq:gamma2}.
\end{eqnarray}
Noting that $\nu$ is purely imaginary and $\Gamma^*(z)=\Gamma(z^*)$, these identities allow us to write
\begin{eqnarray}
|D_{-\nu}(0)|^2=\cos(\pi\nu/2),\\
|D_{-\nu-1}(0)|^2=\frac{\sin(\pi\nu/2)}{\nu}.
\end{eqnarray}

It is also useful to define the parameter
\be
\gamma\equiv\frac{\Delta^2}{\lambda\hbar}=-2i\nu,
\ee
in terms of which the coefficient becomes
\begin{eqnarray}
c_2&=&\left(\frac{2}{\gamma}\cos(i\pi\gamma/4)+\frac{2}{i\gamma}\sin(i\pi\gamma/4)\right)^{-1/2}\\
&=&\sqrt{\frac{\gamma}{2}}e^{-\pi\gamma/8}.
\end{eqnarray} 
So finally we have the complete solutions for the spin up and down amplitudes respectively:
\begin{eqnarray}
A(z)&=&\sgn(\Delta)e^{i3\pi/4}e^{-\pi\gamma/8}D_{-i\gamma/2}(-iz)\label{eq:LZprob1}\\
B(z)&=&\sqrt{\frac{\gamma}{2}}e^{-\pi\gamma/8}D_{-1-i\gamma/2}(-iz).\label{eq:LZprob2}
\end{eqnarray}
Using Eqs.~\eqref{eq:Dn0}, \eqref{eq:LZprob1}, \eqref{eq:LZprob2}, we can also show the useful identity
\be
A(0)B^*(0)=e^{i3\pi/4}e^{-\gamma\pi/4}\frac{\sgn(\Delta)\pi\sqrt{\gamma}}{2\Gamma(1/2+\nu/2)\Gamma(1-\nu/2)}.\label{eq:LZident}
\ee

Using the asymptotic expansions for the parabolic cylinder function, one can show that the probability of a non-adiabatic transition (i.e. spin up at $t=\infty$) is $e^{-\pi\gamma}$. This is the Landau-Zener result, and shows that $\gamma$ is the appropriate adiabaticity parameter. For $\gamma\gg1$, the system remains in the ground state, while for $\gamma\ll1$ the system transitions to the excited state. For our purposes however, we are interested in the case $t=0$ for which we make use of Eqs.~\eqref{eq:LZident}. Supplementary Fig.~\ref{fig:LZ} shows the time-dependence of the transition for small $\gamma$. It is important to note that changes to the distribution of probability only begin very close to $t=0$.

\subsection{Application to the two interacting spins}\label{app:LZ2}
Even though our model,
\be
\mathcal{H}^{\pm}=-(\b{H}_1\cdot\b{\sigma}^1\pm\b{H}_2\cdot\b{\sigma}^2)\pm\tilde{r}f(\theta)\sigma_z^1\sigma_z^2,\label{eq:H_twoqubit}
\ee
is a four-state system, we now show that the dynamics of this model are well captured by Eqs.~\eqref{eq:LZprob1}, \eqref{eq:LZprob2}. We consider the symmetric case $M_1=M_2=M<H$. For simplicity, here it suffices to treat just the $\mathcal{H}^+$ Hamiltonian. 

First, we write $\mathcal{H}^+$ in the singlet-triplet basis 
with ${\bm s}={\bm \sigma}^1+{\bm \sigma}^2$. In this basis, we have
\be
\mathcal{H}^+=-\tilde{r}f(\theta)|0,0\rangle\langle0,0| + \mathcal{H}^+_{\rm trip}.
\ee
Here $|0,0\rangle$ refers to the singlet state and
\be
\mathcal{H}_{\rm trip}^+=\frac{1}{2}\tilde{r}f(\theta)s_z^2-H\sqrt{2}\sin\theta s_x-(H\cos\theta+M) s_z-\tilde{r}f(\theta)\mathbb{I},\label{eq:Htrip}
\ee
where we have made use of the Gell-Mann matrices 
\be
s_x=\begin{pmatrix}
0 & 1 & 0\\
1 & 0 & 1\\
0 & 1 & 0
\end{pmatrix}\;\;
s_z=2\begin{pmatrix}
1 & 0 & 0\\
0 & 0 & 0\\
0 & 0 & -1
\end{pmatrix},
\ee
represented in the triplet basis
\begin{eqnarray}
|1,1\rangle=(1,0,0)^T,\;\;
|1,0\rangle=(0,1,0)^T,\;\;
|1,-1\rangle=(0,0,1)^T.\nonumber\\
\label{eq:basisnew4}
\end{eqnarray}
The singlet component is completely decoupled from the equations of motion, provided we initialize the spins in the ground state at the north pole, which is the triplet state $|1,1\rangle$ for $\tilde{r}<(H+M)/f(0)$; the ground state is $|1,0\rangle$ otherwise. 

The total Chern number is encoded in the spin magnetization value according to
\be
\mathcal{C}=\langle s_z(\theta=0)\rangle-\langle s_z(\theta=\pi)\rangle.
\ee

To simplify the dynamics further we assume that any transitions to excited states occur near $\theta=\pi$. This assumption is justified because for a broad class of interactions, the gap at $\theta=0$ is much larger than at $\theta=\pi$. 
Near $\theta=\pi$ the $|1,1\rangle$ state always has the highest energy, so we may project it out and write an effective two-state model to match the Landau-Zener model: 
\begin{eqnarray}
\mathcal{H}^+_{\rm eff}&=&-[\tilde{r}f(\theta)+H\cos\theta+M]\sigma^z-\sqrt{2}H\sin\theta\sigma_x+(H\cos\theta+M)\mathbb{I},\label{eq:Heff}
\end{eqnarray}
which upon rotating the Pauli matrices and expanding about $\theta=\pi$, takes the form of~\eqref{eq:LZham}.

\subsection{Reversibility}
One might wonder about the reversibility of the above sweep protocol. We investigate this by extending the time evolution to a full sweep around the sphere $t\in[0,2\pi/v]$, with the interaction symmetrized about $\pi$: $f(\theta)\rightarrow f(\pi-|\pi-\theta|)$. From the numerical time evolution over this range we compute a ``reversed Chern number'' defined as
\be
\mathcal{C}^{j}_r\equiv\frac{1}{2}\bigg(\left\langle\sigma_{z}^{j}(\theta=2\pi)\right\rangle-\left\langle\sigma_{z}^{j}(\theta=\pi)\right\rangle\bigg),
\ee
which is shown by the dashed lines of Fig.~\ref{fig:reverse}. In the limit $v\rightarrow0$, we find $\mathcal{C}^{j}_r=\mathcal{C}^{j}$ as expected. For higher speeds, the two Chern numbers agree in the small $\tilde{r}$ and large $\tilde{r}$ regimes, but deviate significantly near the transition. This too can be understood from the Landau-Zener physics as follows. Recall that in the systems we consider, the gap is smallest at $\theta=\pi$ and we can assume that diabatic transitions take place only in the vicinity of $\theta=\pi$ where the two-state effective Hamiltonian~\eqref{eq:Heff} is valid. Since the probability for such a transition only varies appreciably close to $\theta=\pi$ (in accordance with Fig.~\ref{fig:LZ}), we may take it's asymptotic value when considering the spin state after a complete cycle. In other words, the probability of measuring the two-qubit system in the first excited state at $\theta=2\pi$ is $\approx e^{-\pi\gamma}$. Now from Eq.~\eqref{eq:Htrip}, we see that the first excited state at $\theta=2\pi$ is always $|1,0\rangle$ (as long as $\tilde{r}<(H+M)/f(0)$) which has a $\sigma_z^j$ value of 0, while the ground state is $|1,1\rangle$ with a $\sigma_z^j$ value of 1. Thus,
\begin{eqnarray}
\langle \sigma^j_z(\theta=2\pi)\rangle = (1-e^{-\pi\gamma}),
\end{eqnarray}
so that
\begin{eqnarray}
\mathcal{C}^{j}_r&=&\frac{1}{2}\bigg(\langle\sigma_{z}^{j}(\theta=2\pi)\rangle-\langle\sigma_{z}^{j}(\theta=0)\rangle+\langle\sigma_{z}^{j}(\theta=0)\rangle-\left\langle\sigma_{z}^{j}(\theta=\pi)\right\rangle \bigg)\\
&\approx&\frac{1}{2}\bigg((1-e^{-\pi\gamma})-1\bigg)+\mathcal{C}^j\\
&=&\mathcal{C}^j-\frac{1}{2}e^{-\pi\gamma}.\label{eq:Crj}
\end{eqnarray}
Since
\be
\gamma=\frac{(\tilde{r}f(\pi)-H+M)^2}{\sqrt{2}Hv},
\ee
(see Eq.~(21) of the article), the deviation between the regular and reversed Chern numbers is a Gaussian cantered at the critical coupling with a variance that vanishes in the adiabatic limit. Upon plugging in Eq.~(23) from the article for $C^j$, we plot this analytic result in the dashed black curve of Supplementary Fig.~\ref{fig:reverse}.


\subsection{Stability of fractional phases in quantum spin models}

\noteKLH{ In addition, we have verified the presence of fractional phases for $N>2$ spins, related to Sec. \ref{sec:nspins} above, from the time evolution of spin models in the quantum circuit simulator Cirq. This is useful for studying larger spin systems and illustrates that these phases can indeed be seen through the action of unitary gates in a generic quantum computer. To do so, we implement our trajectory via the trotter decomposition. Dividing the time interval $[t_0,t_f]$ into $N_{\rm steps}$ with $\Delta t\equiv(t_f-t_0)/N_{\rm steps}$, and ignoring terms of $\mathcal{O}(\Delta t^2)$, the quantum circuit encoding this time-evolution is given by a product of unitary gates. For the $N$-spins Ising chain 
\be
\mathcal{H}=\sum_{i=1}^N(-\b{H_i}\cdot\bsigma^i+\tilde{r}\sigma_z^i\sigma_z^{i+1}), 
\ee
with $\b{H}_i=(H\sin\theta\cos\phi,H\sin\theta\sin\phi,H\cos\theta+M_i)$,
the circuit is defined by the sequence 
\be
U(t_f,t_0)\approx\Pi_{j=0}^{N_{\rm steps}}\Pi_{l=0}^N g_x^l(vt_j)P^{l,l+1}_z(\tilde{r})g_z^l(vt_j),
\ee
where $g^l_x (vt_j)\equiv e^{i(\Delta t H\sin(v t_j)\sigma^l_x)}$, $g^l_z (vt_j)\equiv e^{i(\Delta t (H\cos(v t_j)+M_l)\sigma^l_z)}$ are rotation gates acting on the spin (qubit) $l$ and $P^{l,l+1}_z(\tilde{r})\equiv e^{-i(\Delta t \tilde{r}\sigma_z^l\sigma_z^{l+1})}$ is the two-qubits parity gate, with the Planck constant $\hbar=h/(2\pi)$ set to unity. A situation with $N$=5 is shown in Fig.~\ref{fig:cirq}. Even with the Trotter error due to the time discretization, the fractional Chern phase $\mathcal{C}^j=3/5$ is clearly visible.}


\pagebreak


\begin{figure}[t]
	\centering
	\includegraphics[width=0.56\columnwidth]{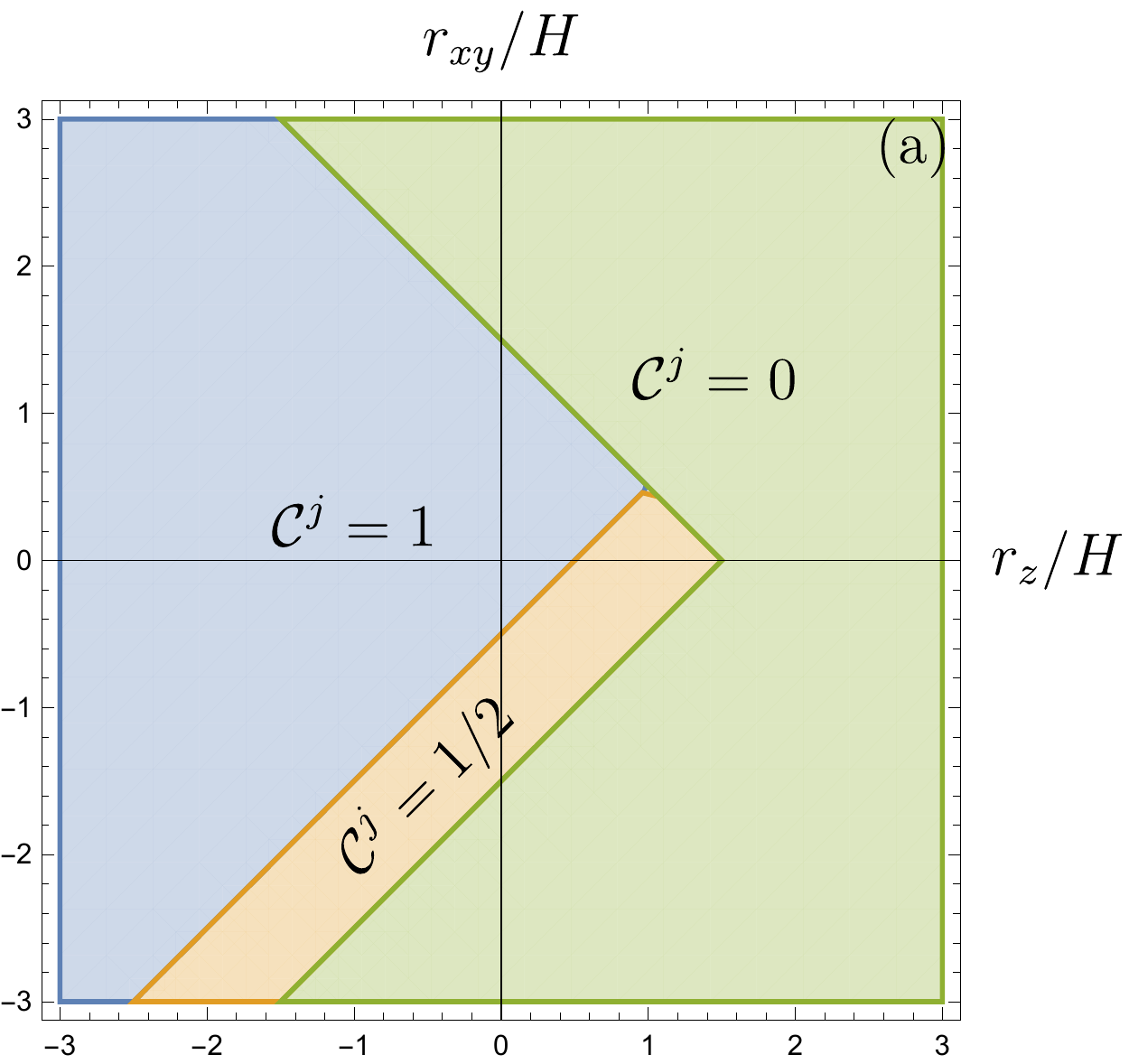}
	\includegraphics[width=0.56\columnwidth]{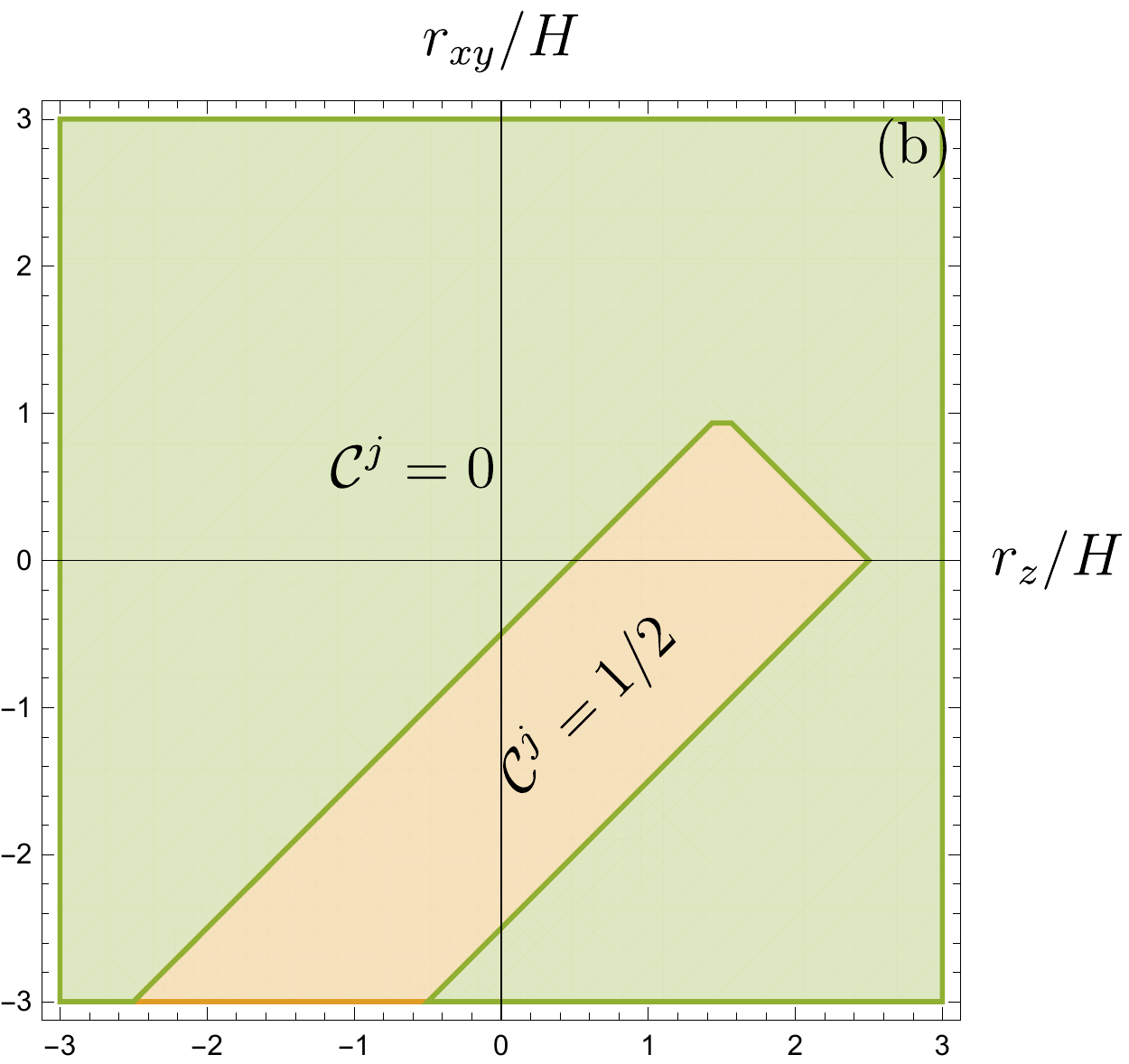}
	\caption{Phase diagrams for the generalized spin model. (a) $M=H/2$. This is qualitatively similar to the phase diagram for all $M<H$. (b) $M=3H/2$. This is qualitatively similar to the phase diagram for all $M>H$.}\label{fig:anis_phase}
\end{figure}

\begin{figure}[t]
	\centering
	\includegraphics[width=0.56\columnwidth]{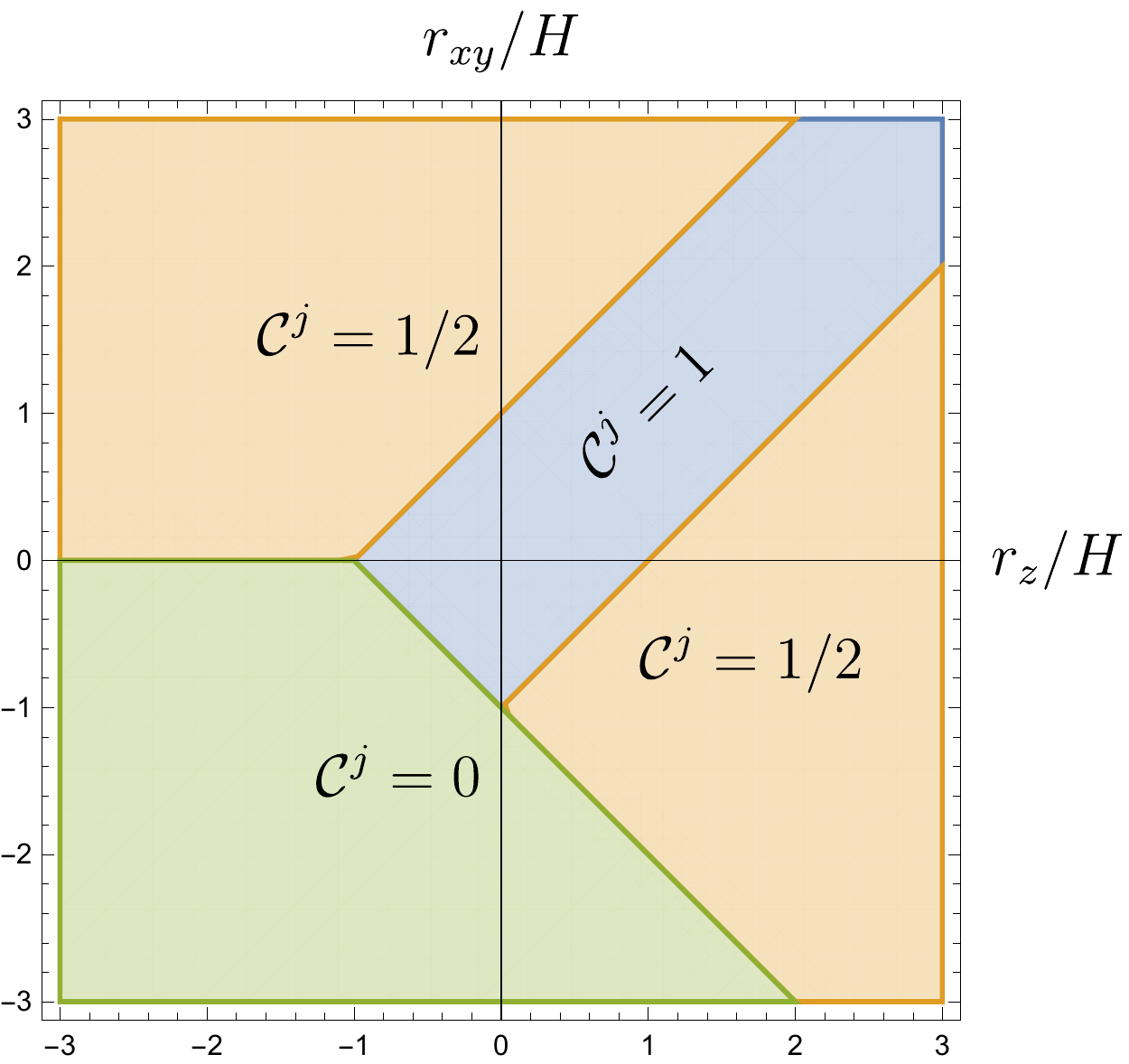}
	\caption{Phase diagram for the generalized spin model with $M=0$ and $\theta$-dependent interactions $f(\theta)=g(\theta)=\sin(\theta-\pi/2)$.}\label{fig:phasesin}
	\end{figure}

\begin{figure}[t]
	\centering
	\includegraphics[width=0.96\columnwidth]{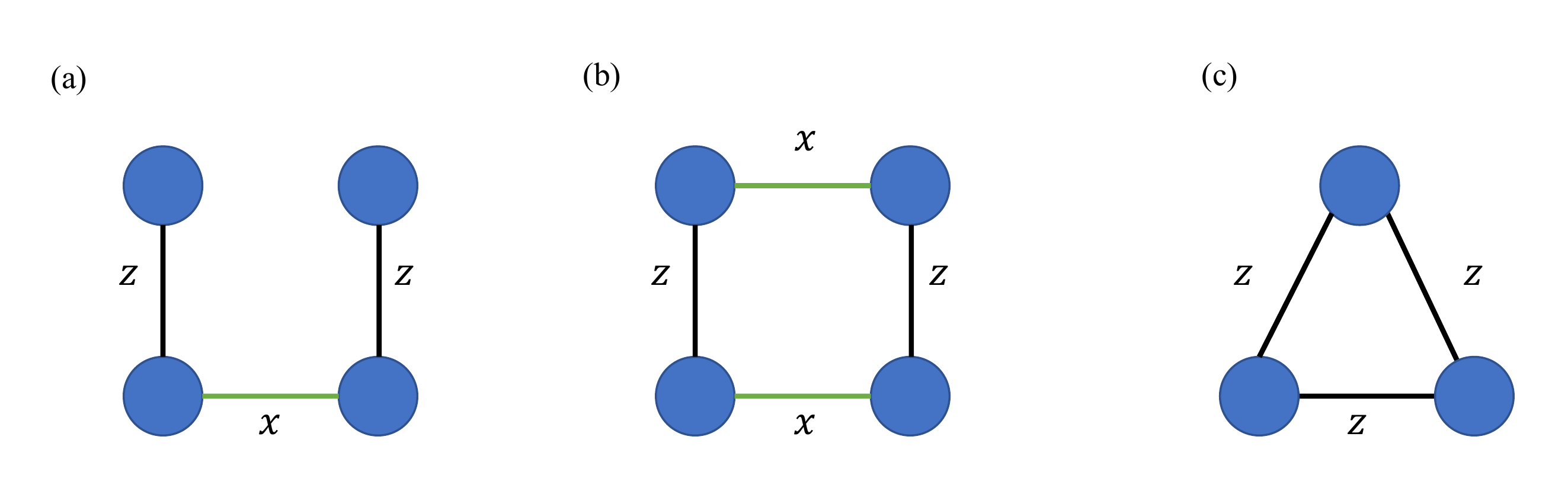}
	\caption{$4$-spin and $3$-spin configurations considered in this paper. $z$ denotes an antiferromagnetic Ising interaction: $r\sigma_i^z\sigma_j^z$, $r>0$. $x$ denotes a transverse coupling $r_x\sigma^x_i\sigma_j^x$. Configuration (a) yields an integer-valued $\mathcal{C}^j$ for all spins, while (b) and (c) produce rational values of $\mathcal{C}^j$.}\label{fig:Nspins}
	\end{figure}

\begin{figure}[b]
	\centering
	\includegraphics[width=0.68\columnwidth]{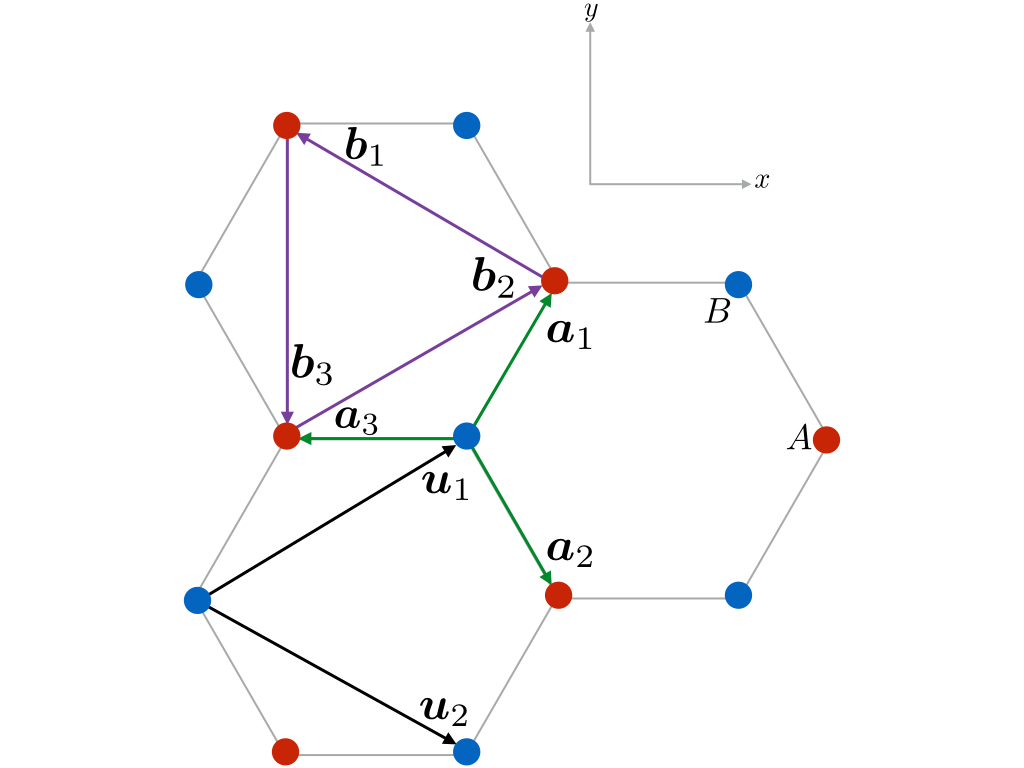}
	\caption{Orientation of vectors on the honeycomb lattice.}\label{fig:Hex}
\end{figure}

\begin{figure}[t]
	\centering
	\includegraphics[width=0.58\columnwidth]{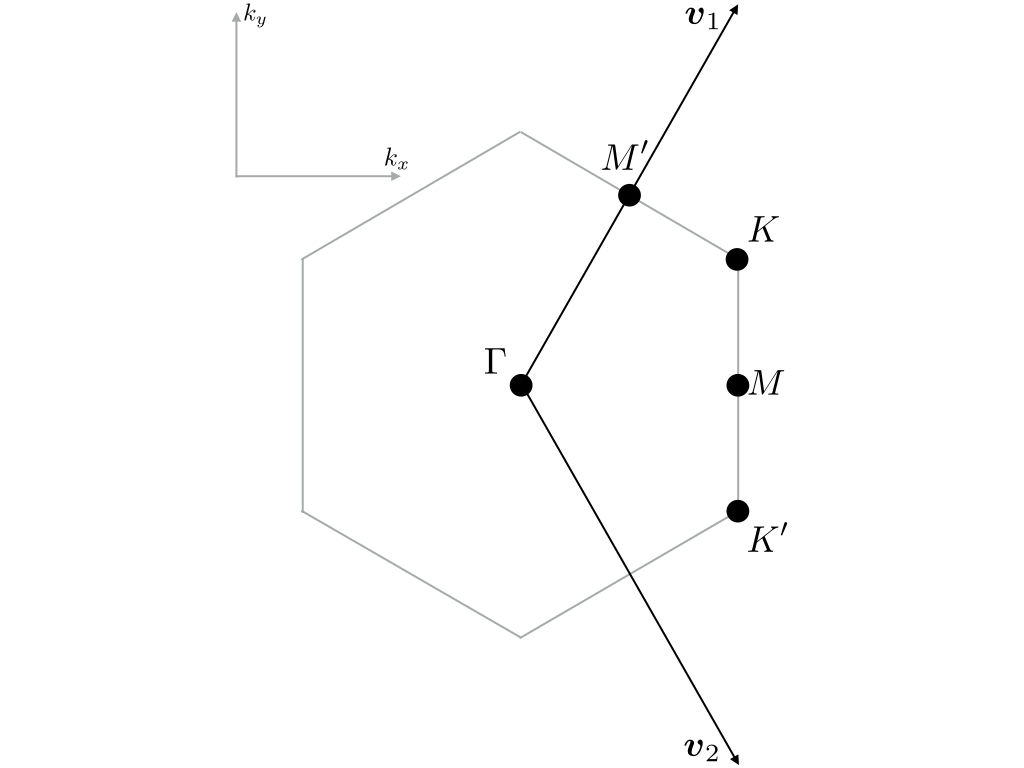}
	\caption{First Brillouin zone of the honeycomb lattice.}\label{fig:Hex2}
\end{figure}

\begin{figure}[t]
	\centering
	\includegraphics[width=0.52\columnwidth]{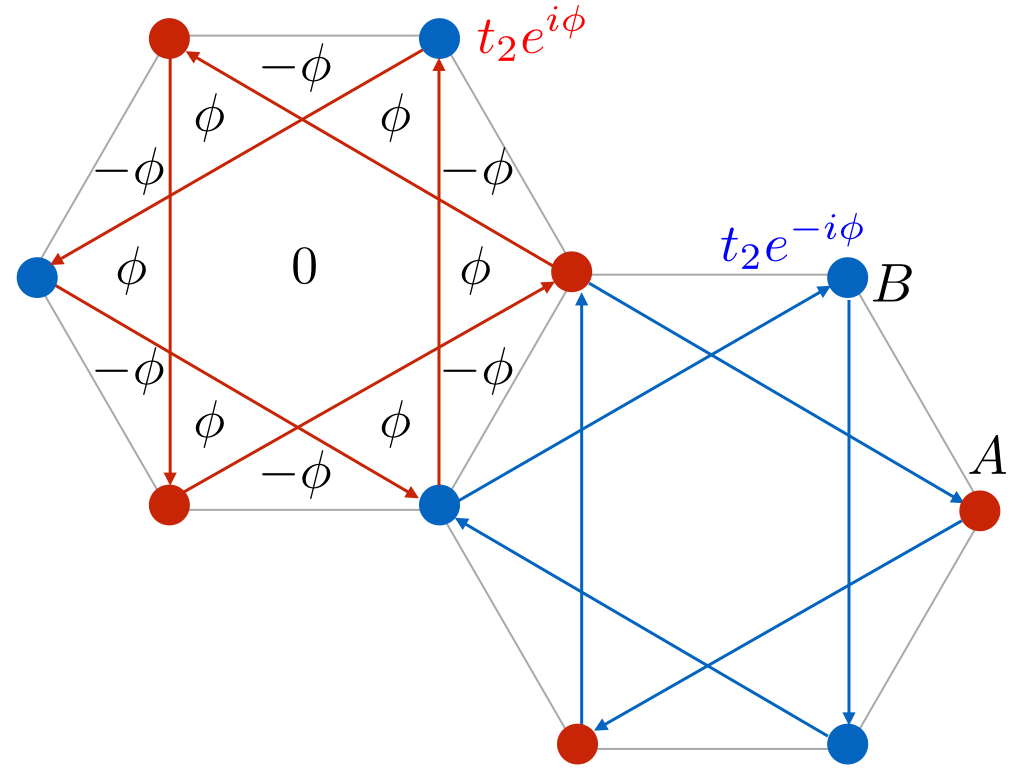}
	\caption{Next-nearest neighbour hoppings and flux orientation for the Haldane model.}\label{fig:Hex3}
\end{figure}

\begin{figure}[t]
	\centering
	\includegraphics[width=0.99\columnwidth]{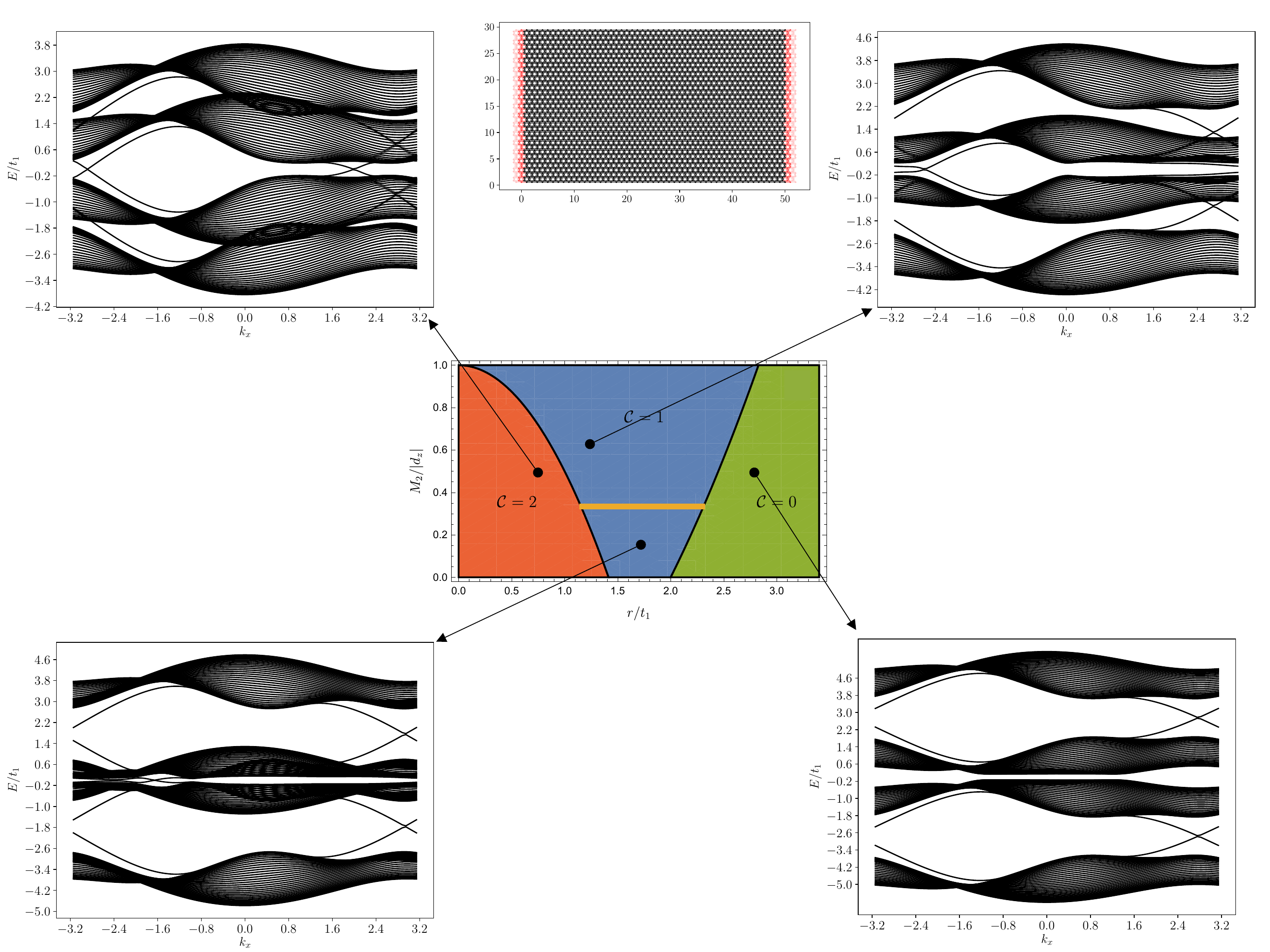}
	\caption{Spectra computed at different points in the phase diagram for the ribbon geometry shown in the top centre with $M_1=\sqrt{3}/3t_1$. The bilayer consists of $30\times50$ sites with periodic boundary conditions in the $x$-direction (indicated in red). Top left: $M_2=0.5\sqrt{3}t_1$, $r=0.75t_1$. Top right: $M_2=0.65\sqrt{3}t_1$, $r=1.25t_1$. Bottom left: $M_2=0.15\sqrt{3}t_1$, $r=1.75t_1$. Bottom right: $M_2=0.5\sqrt{3}t_1$, $r=2.75t_1$.}\label{fig:edge1}
\end{figure}

\begin{figure}[t]
	\centering
	\includegraphics[width=0.98\columnwidth]{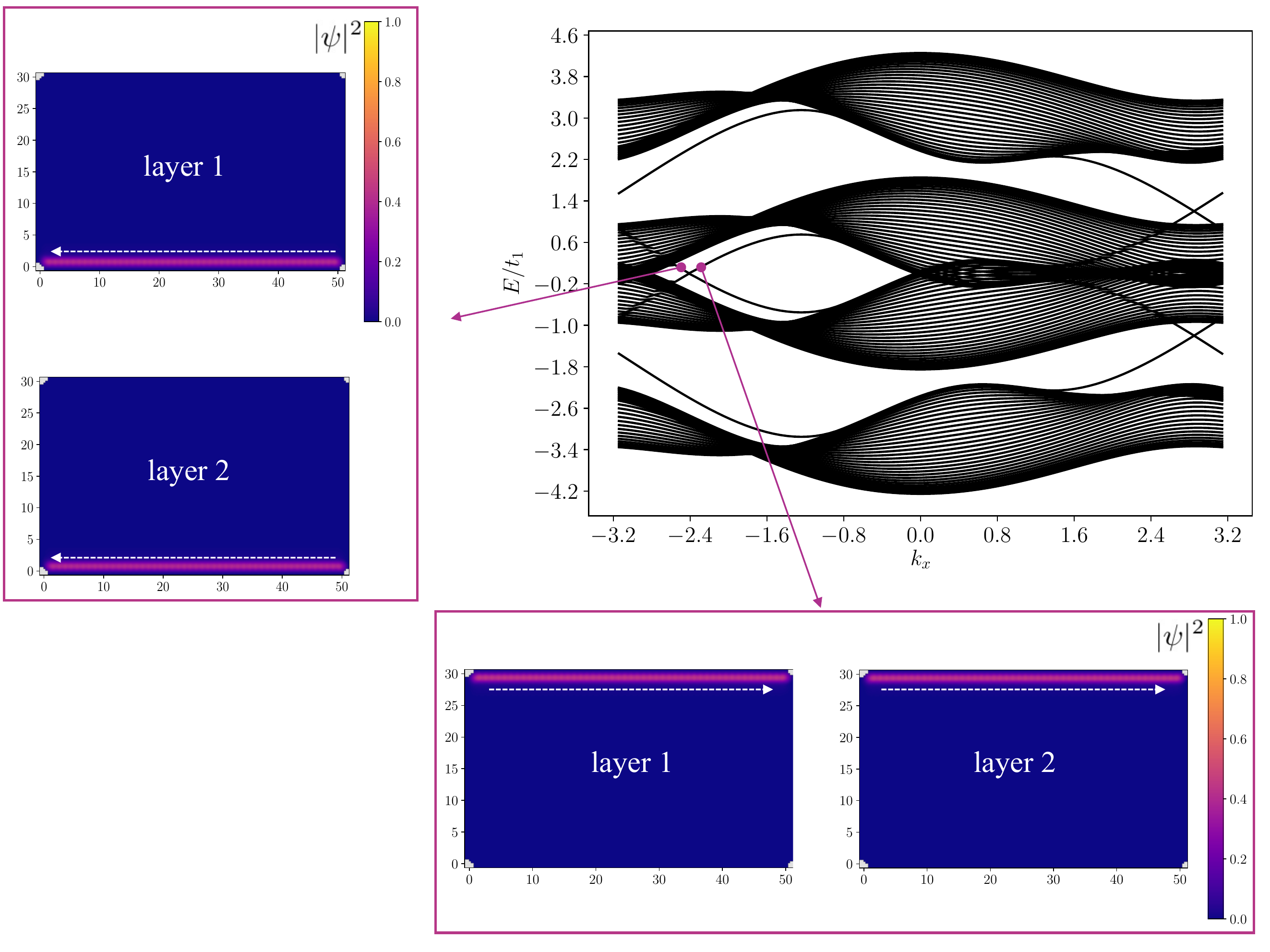}
	\caption{Top right: Spectra computed for the same ribbon geometry as Fig.~\ref{fig:edge1}, but with $M_1=M_2=\sqrt{3}/3t_1$ and $r=1.2t_1$ (in the $\tilde{\mathcal{C}}^j=1/2$ phase). Top left: Left chiral mode at $k_x=-2.5$ showing computed probability density on each layer. Bottom right: Right chiral mode at $k_x=-2.3$ showing computed probability density on each layer. White arrows indicate the current direction.}\label{fig:edge2}
\end{figure}

\begin{figure}[t]
	\centering
	\includegraphics[width=0.7\columnwidth]{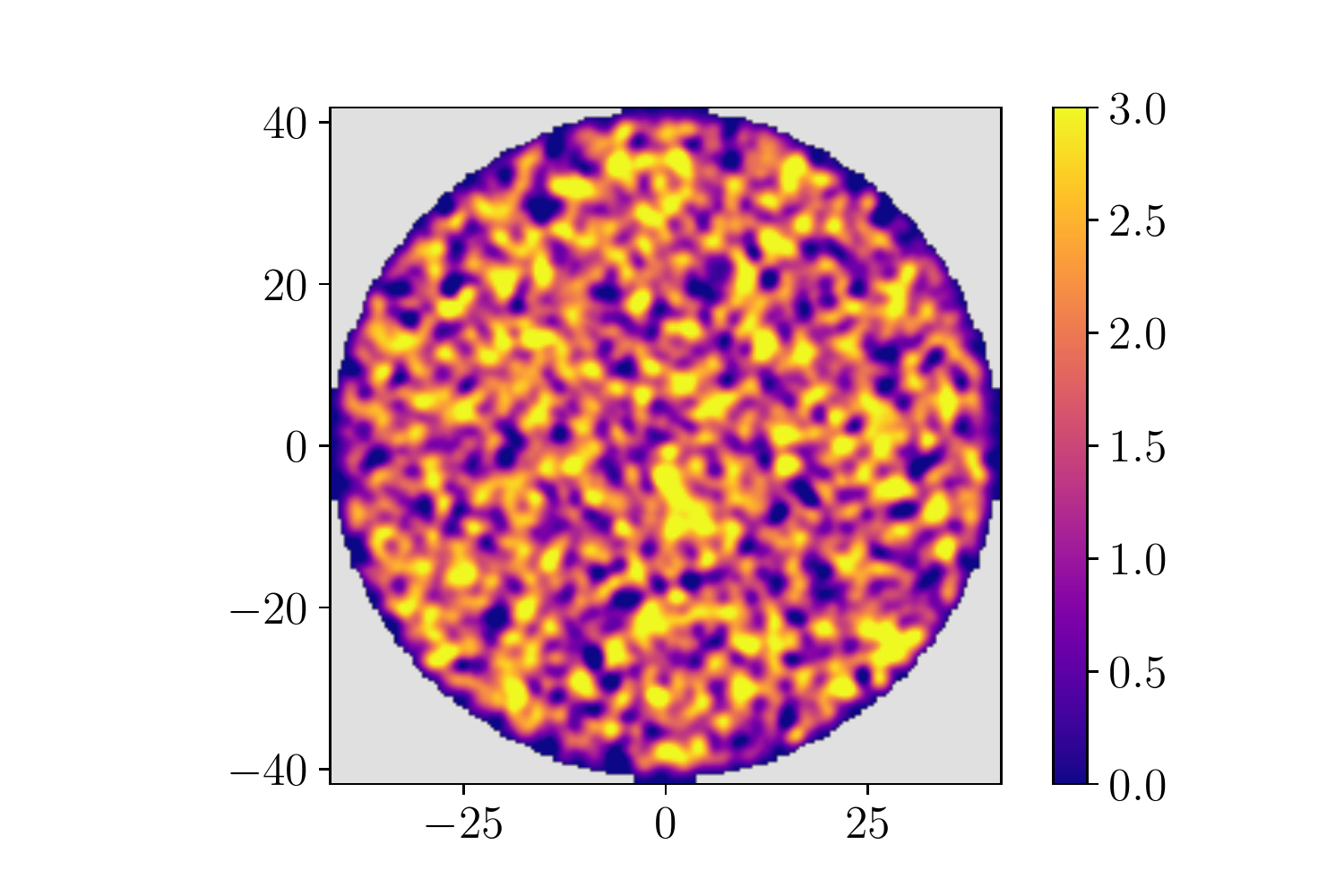}
	\includegraphics[width=0.7\columnwidth]{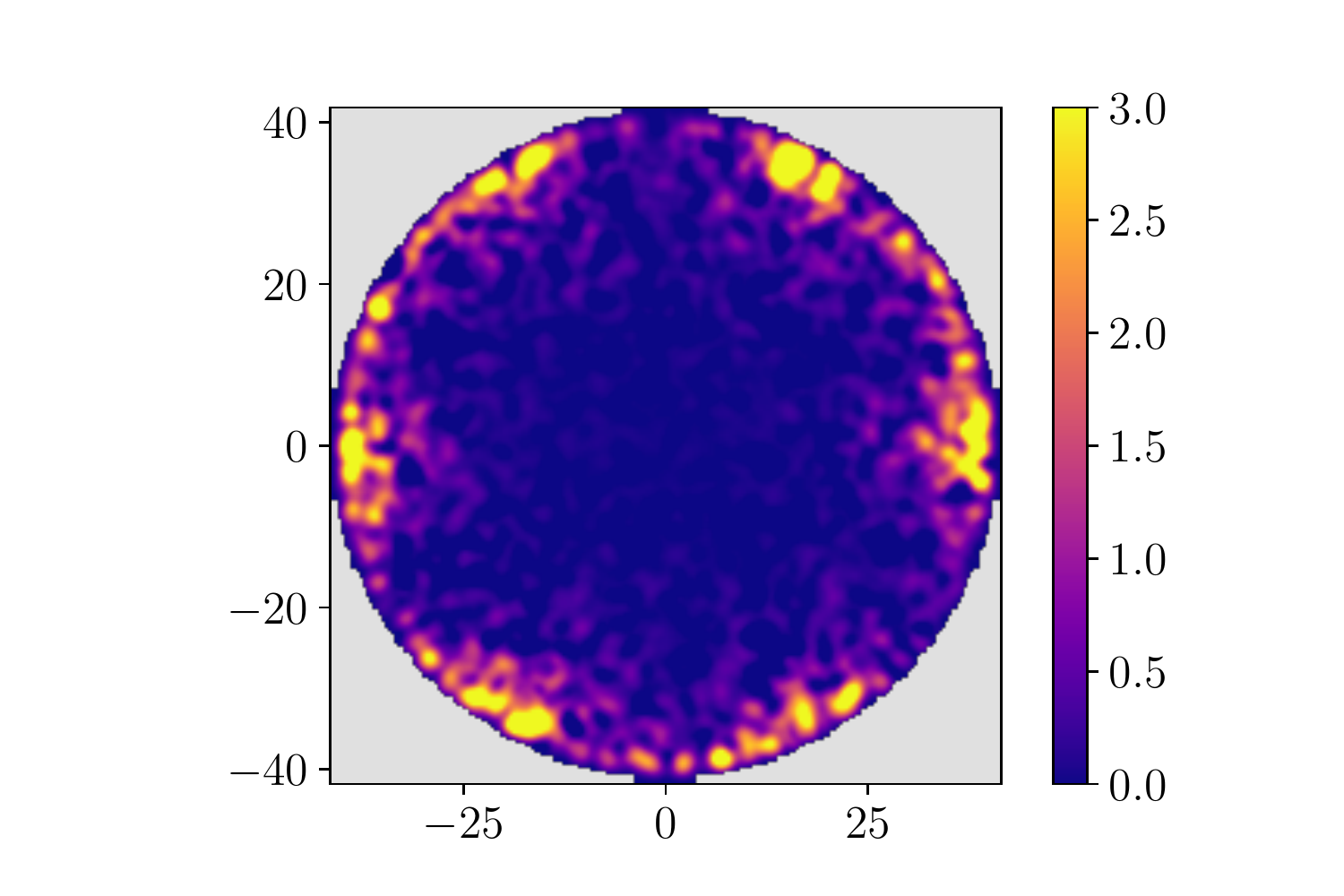}
	\caption{Top: Local density of states for a disk geometry with 30-site radius with $M_1=M_2=\sqrt{3}/3t_1$ and $r=1.4t_1$ showing the edge mode and additional bulk states coming from the nodal ring semimetal in the reciprocal space. Bottom: Local density of states shifted very slightly from the line of symmetry: $M_2=M_1+0.2$, $M_1=\sqrt{3}/3t_1$ and $r=1.4t_1$ in the blue region of the phase diagram, showing the single chiral edge mode.}\label{DOS}
\end{figure}

\begin{figure}[t]
	\centering
	\includegraphics[width=0.76\columnwidth]{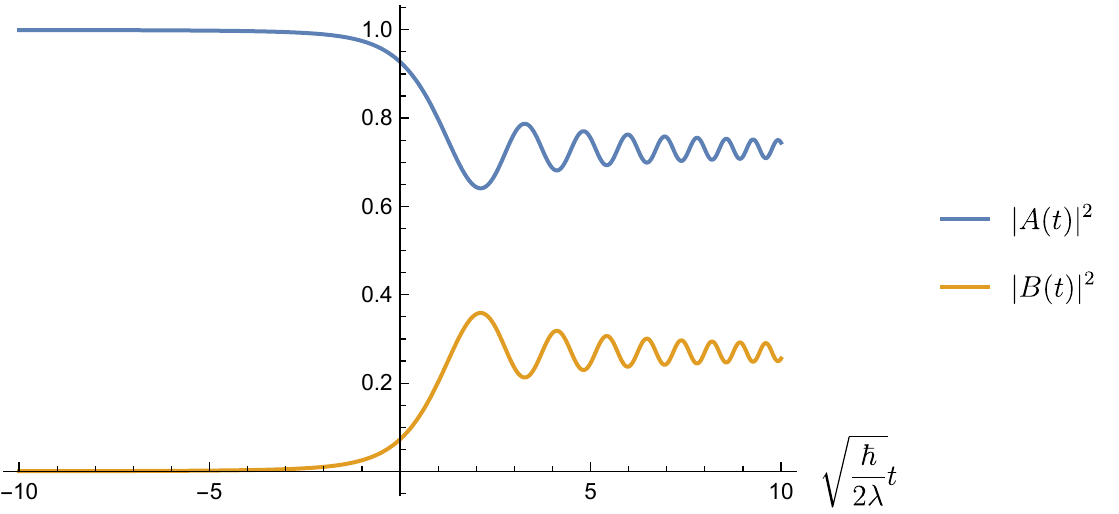}
\caption{Probabilities ($|A(t)|^2$) and ($|B(t)|^2$) with $\gamma=0.1$.}
\label{fig:LZ}
\end{figure}

\begin{figure}[t]
	\centering
	\includegraphics[width=0.76\columnwidth]{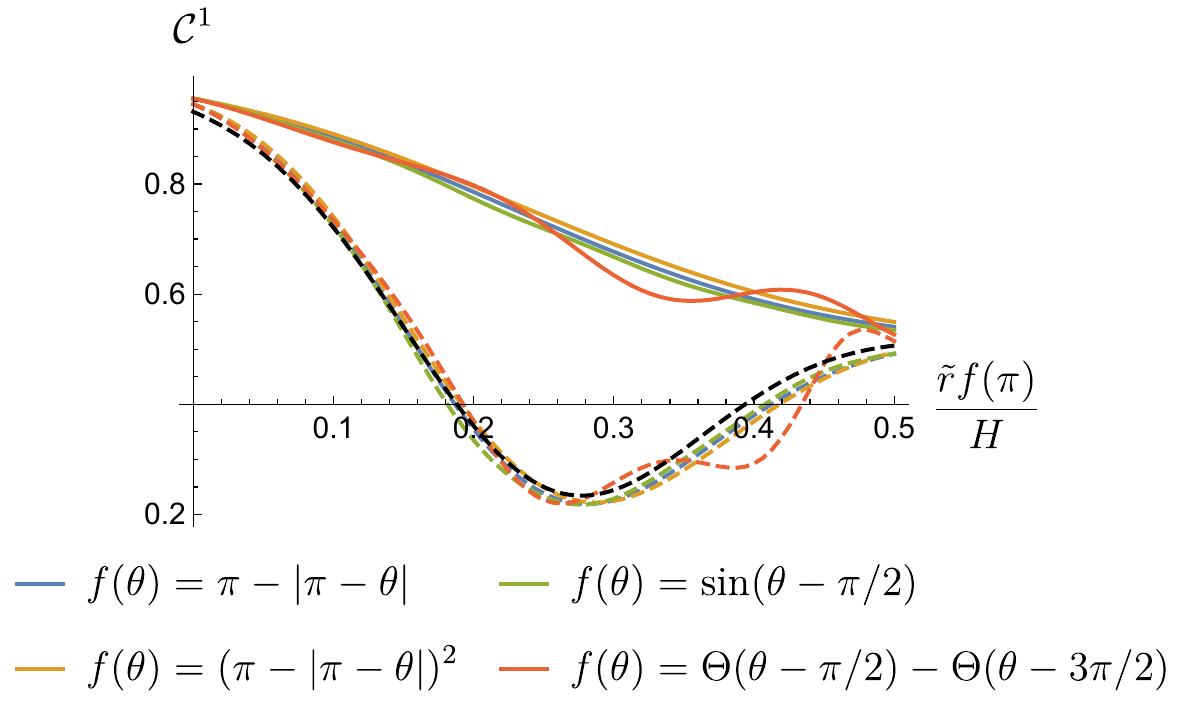}
\caption{Numerically determined Chern number $C^j$ (solid) and reversed Chern number $C_r^j$ (dashed) of a single spin vs $\tilde{r}f(\pi)/H$ for different interactions with $v=0.05H$; $\Theta$ refers to the Heaviside step function. The dashed black line shows the analytic approximation of $C_r^j$ (Eq.~\eqref{eq:Crj}) which is universal for a given speed $v$.}
\label{fig:reverse}
\end{figure}

\begin{figure}
	\centering
	\includegraphics[width=0.5\columnwidth]{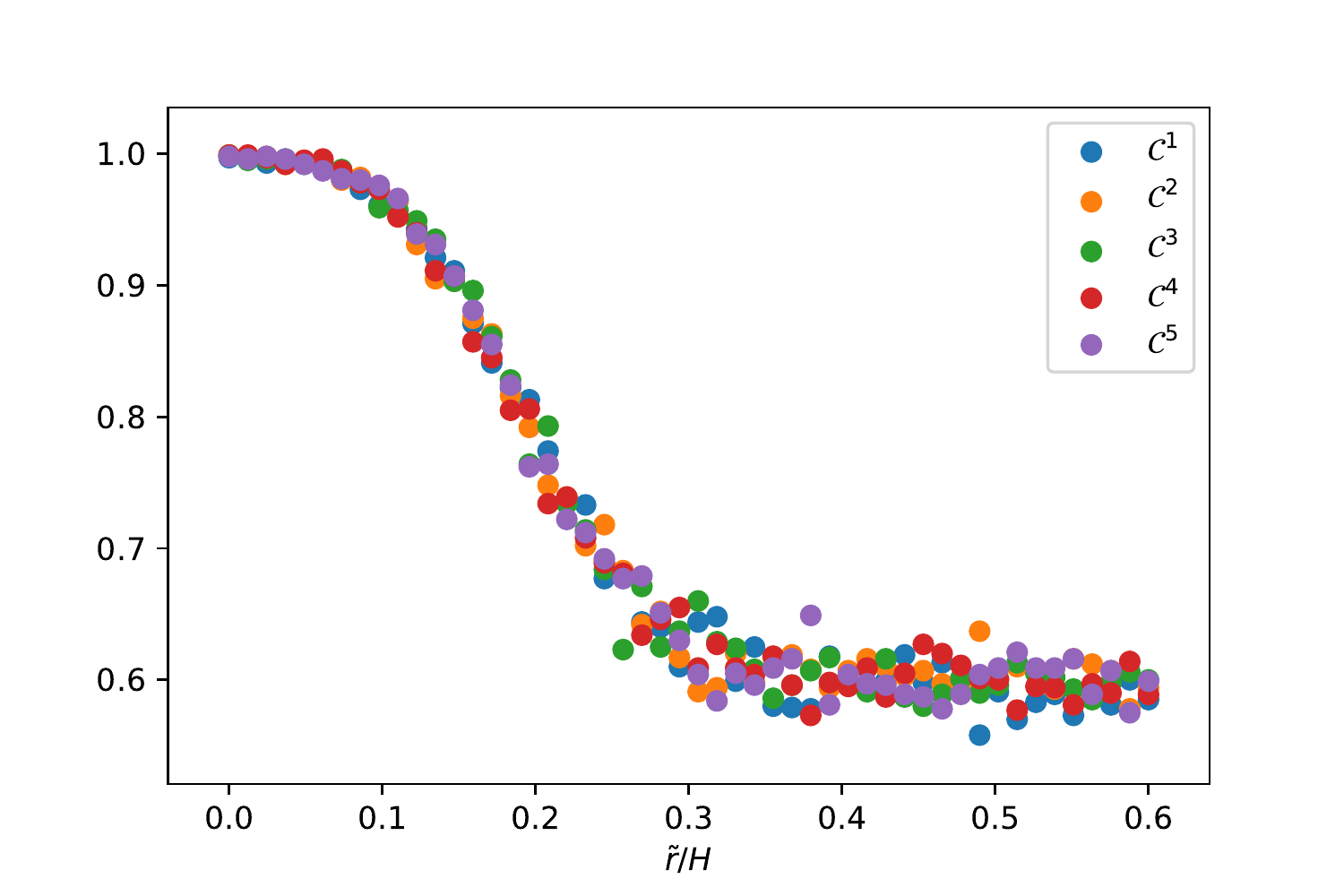}
\caption{\noteKLH{Partial Chern numbers as a function of the coupling $\tilde{r}$ measured in a five-spins quantum circuit simulation with nearest-neighbour Ising interactions and periodic boundary conditions. To time-evolve the spins (qubits), we use a Trotter decomposition with 800 time steps and sweep velocity $v=0.03H$. The bias field for all qubits is fixed to $M=0.6H$.}}\label{fig:cirq}
\end{figure}

\end{document}